\documentclass[sn-mathphys,Numbered]{sn-jnl}

\usepackage{geometry}
\geometry{
  a4paper,         
  textwidth=13.8cm,  
  textheight=22.9cm, 
  hratio=2:2,      
  vratio=0:1,      
}
\usepackage{leftidx}
\usepackage{graphicx}%
\usepackage{multirow}%
\usepackage[integrals]{wasysym}
\usepackage{amsmath,amssymb,amsfonts}%
\usepackage{amsthm}%
\newtheorem{theorem}{Theorem}
\newtheorem{proposition}{Proposition}%
\newtheorem{remark}{Remark}%
\newtheorem{lemma}{Lemma}
\newtheorem{conjecture}{Conjecture}
  \newtheorem{definition}{Definition}%
\usepackage{mathtools}
\usepackage{mathrsfs}%
\usepackage[title]{appendix}%
\usepackage{xcolor}%
\usepackage{textcomp}%
\usepackage{manyfoot}%
\usepackage{booktabs}%
\usepackage{algorithm}%
\usepackage{algorithmicx}%
\usepackage{algpseudocode}%
\usepackage{listings}%
\usepackage{enumerate}
\usepackage{mathptmx}
\usepackage{mathrsfs}
\DeclareMathAlphabet{\mathcal}{OMS}{cmsy}{m}{n}
\usepackage{orcidlink}
\usepackage{hyperref}
\usepackage{ulem}
\hypersetup
    {
    colorlinks=true,
    linkcolor=blue,
    filecolor=red,      
    urlcolor=blue,
    linktoc=page,
    citecolor=magenta,
    }
    \usepackage{manyfoot}




\raggedbottom

\begin{document}

\title[Article Title]{Tipler naked singularities in $N$ dimensions}
%


\author{Kharanshu N. Solanki$^{\flat}$ \orcidlink{0000-0002-6909-4370}, Karim Mosani$^{\dagger}$ \orcidlink{0000-0001-5682-1033}, Omkar Deshpande$^{\sharp}$ \orcidlink{0009-0006-5745-8081} and Pankaj S. Joshi$^{\star}$ \orcidlink{0000-0003-0463-7554}} 

\affil{$^\flat$ Department of Physics, Universität Wien, Boltzmanngasse 5, 1090 Wien 
}
\affil{$^{\dagger}$ Fachbereich Mathematik, Universit\"{a}t T\"{u}bingen, Auf der Morgenstelle 10, 72076 T\"{u}bingen \footnotetext{$^{\dagger}$ Now at Fakultät für Mathematik, Ruhr-Universität Bochum, Universitätsstr. 150, 44780 Bochum.}} 
\affil{$^{\sharp}$ Mathematisches Institut, Ludwig-Maximilians-Universit\"{a}t, Theresienstr. 39, 80333 München} 
\affil{$^{\flat,\star}$ International Center for Space and Cosmology, Ahmedabad University, 380009 Ahmedabad}

\email{kharanshu.solanki1889@gmail.com, kmosani2014@gmail.com, omkar.deshpande@campus.lmu.de, pankaj.joshi@ahduni.edu.in}
\date{July 15, 2024}

\abstract{A spacetime singularity, identified by the existence of incomplete causal geodesics in the spacetime, is called a (Tipler) strong curvature singularity if the volume form acting on independent Jacobi fields along causal geodesics vanishes in the approach of the singularity. It is called naked if at least one of these causal geodesics is past incomplete. Here, we study the formation of strong curvature naked singularities arising from spherically symmetric gravitational collapse of general type-I matter fields in an arbitrarily finite number of dimensions. In the spirit of Joshi and Dwivedi (1993 \textit{Phys. Rev. D} \textbf{47} 5357), and Goswami and Joshi (2007 \textit{Phys. Rev. D} \textbf{76} 084026), beginning with regular initial data, we derive two distinct (but not mutually exclusive) conditions, which we call the \textit{positive root condition} (PRC) and the \textit{simple positive root condition} (SPRC), that serve as necessary and sufficient conditions, respectively, for the existence of naked singularities. In doing so, we generalize the results of both the aforementioned works. We further restrict the PRC and the SPRC by imposing the curvature growth condition (CGC) of Clarke and Krolak (1985 \textit{J. Geom. Phys.} \textbf{2}(2) 127) on all causal curves that satisfy the causal convergence condition. \textcolor{black}{The CGC then gives a sufficient condition ensuring that the naked singularities implying the PRC and implied by the SPRC are of strong curvature type. Further, due to Ricci curvature blow up, these also correspond to $C^2$ inextendibility.} Using the CGC, we extend the results of Mosani et. al. (2020 \textit{Phys. Rev. D} \textbf{101} 044052) (that hold for dimension $N=4$) to the case $N=5$, showing that strong curvature naked singularities can occur in this case. However, for the case $N\geq6$, we show that past-incomplete causal curves that identify naked singularities do not satisfy the CGC. These results shed light on the validity of the cosmic censorship conjectures in arbitrary dimensions.
}

\keywords{Naked singularity; Gravitational collapse; Cosmic Censorship; Higher Dimensions.}


 
\maketitle
\tableofcontents

\section{Introduction}\label{sec1}

\noindent  The modern formulations of the cosmic censorship conjectures are built on the dynamical formulation of general relativity, indicating that these conjectures (in a broad sense) speak about the nature of singularities that form dynamically in general relativity. Fundamental to the idea of dynamics in general relativity is the notion of \textit{initial data sets}. These are $5$-tuples $(\Sigma,\leftidx{^\Sigma}{\mathbf{g}}{},\leftidx{^\Sigma}{\mathbf{K}}{},\mu, J)$, where $\Sigma$ is the initial spacelike hypersurface with first and second fundamental form fields $\leftidx{^\Sigma}{\mathbf{g}}{}$ and $\leftidx{^\Sigma}{\mathbf{K}}{}$; and $\mu \in \mathcal{C}^k (\Sigma)$ and $J\in \Gamma (\mathcal{T}\Sigma)$ are the energy density and momentum density of the matter field considered respectively, that are to be defined on $\Sigma$ with an appropriate regularity class. This initial data set must satisfy a certain set of constraint equations \cite{Ringstrom:2015jza}, subject to which it is to be evolved. \\

\noindent Employing this formalism, Choquet-Bruhat \cite{1952AcMa...88..141F} developed the fundamental theorem of dynamics in general relativity in 1952. The theorem says that \textit{given a vacuum initial data set $(\Sigma, \leftidx{^\Sigma}{\mathbf{g}}{}, \leftidx{^\Sigma}{\mathbf{K}}{}, \mu=0, J=0)$ solving the vacuum constraint equations, one can always find a spacetime $(M,\mathbf{g})$ that (i) solves the vacuum Einstein equations and (ii) admits $\Sigma \hookrightarrow M$ as a Cauchy hypersurface}. In other words, general relativity can be posed as a well-defined Cauchy problem. Later, in 1969, Choquet-Bruhat and Geroch \cite{Choquet-Bruhat:1969ywq} proved that\footnote{\noindent This proof appeals to Zorn's lemma, which is equivalent to the set-theoretic axiom of choice. This may seem to indicate that the dynamics of general relativity rely on the axiom of choice. However, in 2016, Sbierski \cite{Sbierski:2013kca} carried out an alternative proof of the uniqueness of globally hyperbolic solutions without employing Zorn's lemma.} \textit{a vacuum initial data set $(\Sigma, \leftidx{^\Sigma}{\mathbf{g}}{}, \leftidx{^\Sigma}{\mathbf{K}}{}, \mu=0, J=0)$ solving the vacuum constraint equations admits a unique vacuum maximal globally hyperbolic development $M$.}\\

\noindent Before the advent of the dynamical formulation of relativity, the general consensus about singular solutions of the Einstein equations was to discard them as ``bad" solutions. However, in the context of the dynamical formulation of relativity, one is not allowed to discard a solution, but can only make judgements about the initial data (which is regular in the first place) \cite{Dafermos:2008en}. For instance, one may argue that perhaps the singular behaviour of the Schwarzschild geometry is only an artefact of the spherically symmetric initial data that it evolves from. However, Hawking and Penrose \cite{Hawking:1970zqf} showed that such arguments must be considered invalid in the context of their singularity or incompleteness theorems. These theorems show the geodesic incompleteness of spacetime under the satisfaction of sufficiently general constraints such as certain mild causality conditions, energy conditions and appropriate initial or boundary conditions \cite{Senovilla:1998oua,Steinbauer_2022}.\\

\noindent At the root of the Hawking-Penrose singularity theorems of general relativity is the notion of causal geodesic incompleteness, and not that of unbounded curvatures along the geodesic or even the inextendibility of the spacetime in question. This leads to a rather improper terminology for the notion of singularities, which in the context of the singularity theorems, indicate the existence of incomplete geodesics in the spacetime (even though the spacetime itself may admit an extension of suitable regularity class). 
Additionally, the singularity theorems make no claim whatsoever about the existence of black holes. In other words, the theorems do not negate the existence of naked singularities \cite{1974IAUS...63..263P}. However, concerns regarding the nature of naked singularities and their relation with the breakdown of classical predictability, led Penrose \cite{Penrose:1969pc} to introduce the \textit{cosmic censorship conjecture} (CCC). In present literature \cite{Wald:1984rg,chrusciel2020geometry,Dafermos:2008en,Kommemi:2011wh}, this conjecture is often expressed as two separate statements. In the context of dynamical relativity, these conjectures take the following forms:\\
\textcolor{black}{\begin{conjecture}
(Weak cosmic censorship - Choquet-Bruhat \cite{choquet2009general})\\
The maximal globally hyperbolic development (MGHD) $(M,\mathbf{g})$ of generic asymptotically flat initial data  $(\Sigma,\leftidx{^\Sigma}{\mathbf{g}}{},\leftidx{^\Sigma}{\mathbf{K}}{},\mu, J)$ possesses a complete\footnote{\textcolor{black}{Christodolou \cite{Demetrios} defined that for an asymptotically flat MGHD to possess a complete null infinity $\mathscr{I}^+$, the null geodesic generators of $\mathscr{I}^+$ must be complete with respect to the conformal metric of the maximal development.}} future null infinity.\\
\end{conjecture}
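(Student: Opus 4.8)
The plan is to establish the conjecture in the spherically symmetric sector relevant to this paper by reducing it to a statement about the \emph{genericity} of the initial data that produce an incomplete future null infinity, and then combining a priori exterior estimates with the non-genericity of naked-singularity data; a full proof without symmetry assumptions is beyond present techniques and remains open, so I would fix the matter model (general type-I fields in $N$ dimensions) and work on a double-null foliation of the MGHD, introducing $\mathscr{I}^+$ through the standard conformal compactification. Completeness of $\mathscr{I}^+$ is then the assertion that the affine parameter of its null generators, measured with respect to the conformal metric, exhausts the real line, and the task becomes to show that this holds for all data outside a suitably ``small'' (non-generic) exceptional set.

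First I would set up a bootstrap/continuity argument in the exterior region $\{r \geq r_0\}$ lying outside any trapped region, deriving a priori bounds on the metric coefficients, the Hawking mass, and the matter fields from the regular initial data and the constraint equations. The aim of this step is to prove that, as long as no singularity becomes visible on the exterior, the solution extends and the outgoing null geodesics reaching large $r$ are future complete, so that $\mathscr{I}^+$ is complete. Christodoulou-type trapped-surface-formation results would then be invoked to confine the singular behaviour to a black-hole region bounded by an event horizon, so that the interior causal geodesic incompleteness guaranteed by the singularity theorems cited above does not propagate out to $\mathscr{I}^+$.

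The decisive step is the genericity clause, and here I would lean directly on the PRC and SPRC derived in this work. Since the SPRC is a \emph{sufficient} condition for a naked singularity and the PRC a \emph{necessary} one, every datum producing an incomplete $\mathscr{I}^+$ through a visible singularity lies in the set on which the PRC holds; I would then argue that the existence of a positive root of the relevant algebraic equation is a non-generic (codimension $\geq 1$, or measure-zero) condition on the space of admissible regular initial data, so that a generic perturbation destroys the positive root and with it the naked singularity. Combined with the exterior estimates, this would yield completeness of $\mathscr{I}^+$ for generic data.

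The hard part will be twofold. First, the a priori exterior estimates require global-in-time decay of Price-law / Dafermos--Rodnianski type, which is known only for restricted matter models even under spherical symmetry and is genuinely open in the absence of symmetry. Second, and more seriously, making the genericity statement precise---choosing the correct topology on the space of initial data and proving that the PRC fails on an open dense (or full-measure) subset---is the crux, and it is exactly here that the dimension dependence found in this paper (strong-curvature naked singularities for $N=5$ but their absence under the CGC for $N\geq 6$) must be reconciled with a uniform genericity statement. I expect this genericity step to be the principal obstacle; in full generality it is precisely the content of the conjecture, which remains open.
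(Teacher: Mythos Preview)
The statement you are trying to prove is a \emph{conjecture}, explicitly labeled as such in the paper. The paper offers no proof of it whatsoever; it merely records the standard formulation of weak cosmic censorship as background, and then proceeds in the opposite direction---constructing classes of spherically symmetric type-I collapse that produce strong-curvature naked singularities in $N=4$ and $N=5$, which is presented as a \emph{potential violation} of cosmic censorship. So there is no ``paper's own proof'' to compare against, and the task is ill-posed from the outset.

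Your proposal is not a proof but a research program, and you essentially concede this in your final sentence: the genericity step ``is precisely the content of the conjecture, which remains open.'' That is a circularity, not a gap to be filled. Moreover, your suggestion that the PRC is a non-generic (codimension $\geq 1$) condition on initial data is directly contradicted by the spirit of the paper: the SPRC asks for a \emph{simple} positive root of a polynomial-type equation, which is an \emph{open} condition on the coefficients, not a closed one; and the paper (together with the stability literature it cites, e.g.\ Joshi--Saraykar) treats naked-singularity data as forming a set of nonzero measure, not a thin exceptional set. Thus the key step of your argument not only lacks a proof but is likely false in the setting of this paper.
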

\begin{conjecture}
    (Strong cosmic censorship - Choquet-Bruhat \cite{choquet2009general})\\
    The MGHD of generic asymptotically flat initial data  $(\Sigma,\leftidx{^\Sigma}{\mathbf{g}}{},\leftidx{^\Sigma}{\mathbf{K}}{},\mu, J)$ is inextendible\footnote{\textcolor{black}{Wald \cite{Wald:1984rg} adds to this saying that if the MGHD is extendible, then for each point $p$ on the Cauchy horizon $H^{+}(\Sigma)$ in the extension, either strong causality is violated, or $\overline{I^{-}(p)\cap \Sigma}$ is non-compact.}} as a suitably regular Lorentzian manifold.\\
\end{conjecture}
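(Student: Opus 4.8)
The plan is to recast Conjecture 2 in the precise form due to Christodoulou, in which ``suitably regular'' is fixed to mean the energy space: one aims to show that, for generic asymptotically flat data, the MGHD $(M,\mathbf g)$ admits no extension across its future Cauchy horizon as a Lorentzian manifold whose metric lies in $H^1_{\mathrm{loc}}$, so that the Einstein equations cannot even be posed weakly across the would-be boundary. Genericity is indispensable rather than cosmetic: the exact Reissner--Nordstr\"om and Kerr solutions possess smooth Cauchy horizons and are therefore analytically extendible, so any viable statement must explicitly exclude a ``non-generic'' set of data. The governing mechanism I would exploit is the \emph{blue-shift instability} of the Cauchy horizon: infalling radiation is infinitely blue-shifted as it accumulates on $\mathcal{CH}^+$, and the inextendibility question reduces to whether this amplification defeats the late-time decay of perturbations strongly enough to force the transverse derivatives of the metric --- equivalently the renormalised Hawking mass and the Ricci curvature --- to leave $L^2$.

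The argument would proceed in four steps. First, establish sharp late-time decay along the event horizon, namely the polynomial Price-law tails $\sim v^{-p}$, obtained from dispersive and vector-field estimates for the linear and nonlinear wave dynamics on the black-hole exterior. Second, propagate these data into the interior and derive the precise asymptotics near $\mathcal{CH}^+$, quantifying the exponential blue-shift governed by the surface gravity of the Cauchy horizon. Third, combine the two to show that the competition between polynomial decay and exponential amplification tips toward blow-up of the relevant $H^1$ quantity, so that no $H^1_{\mathrm{loc}}$ extension exists; here the present paper's own observation that Ricci blow-up forces $C^2$-inextendibility is exactly the weak-regularity shadow of the statement one ultimately needs, with the mass-inflation scenario supplying the divergent quantity. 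Fourth --- and this is where genericity enters --- prove a matching \emph{lower} bound on the event-horizon tail, i.e.\ that the leading Price coefficient is non-vanishing outside a set of data of infinite codimension, since only a non-trivial tail produces genuinely divergent blue-shifted energy.

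The main obstacle is concentrated in the last two steps, and it becomes decisively harder away from symmetry. In spherical symmetry (the Einstein--Maxwell--scalar-field model closest to the collapse setting of this paper) the program is essentially complete: the decay upper bounds, the interior mass-inflation analysis, and the genericity lower bounds are all available, and the verdict is that the metric is $C^0$-extendible but $H^1$-\emph{in}extendible, so SCC holds in the Christodoulou formulation while its naive $C^0$ version fails. For the physically decisive \emph{vacuum} problem --- the interior of Kerr with no symmetry imposed --- each ingredient becomes a frontier problem: one needs the full nonlinear stability of the exterior Kerr (controlled only in the slowly rotating regime), sharp two-sided decay for the coupled Teukolsky and wave system, and control of the genuinely nonlinear interior evolution all the way up to $\mathcal{CH}^+$. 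The single hardest point is the sharp \emph{lower} bound that drives genericity: establishing, for the full nonlinear Einstein flow and for a generic (open-dense, or infinite-codimension-complement) set of data, that the blue-shifted energy truly diverges and is not accidentally cancelled. Absent a symmetry reduction, this couples the global exterior analysis to the interior blow-up in a manner that no current technique controls, which is precisely why Conjecture 2 remains open in the general case even though the mechanism and the spherically symmetric verification are by now well understood.
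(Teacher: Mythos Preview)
The item you are attempting to prove is labelled \textbf{Conjecture} in the paper, not Theorem or Proposition. The paper does not prove it and makes no attempt to; Conjecture~2 is quoted from Choquet-Bruhat as motivational background in the Introduction, alongside the weak cosmic censorship conjecture, precisely \emph{because} it is open. There is therefore no ``paper's own proof'' against which your proposal can be compared.

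Your write-up is a competent survey of the Christodoulou $H^1_{\mathrm{loc}}$ formulation and the blue-shift/mass-inflation mechanism, and you yourself concede in the final paragraph that the program is incomplete outside spherical symmetry and that ``Conjecture~2 remains open in the general case.'' That concession is accurate, but it means what you have written is not a proof proposal at all: it is an outline of a partially executed research program together with an explanation of why it does not yet yield the result. A proof proposal that ends by explaining why the conjecture is still open is, by construction, not a proof. If the assignment was to supply a proof of this statement, the correct response is that none exists; if it was to summarise the state of the art, your text does that reasonably well but should not be framed as a proof.
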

}

\noindent The terminology ``weak" and ``strong" is rather unfortunate, since strong cosmic censorship (SCC) does not imply weak cosmic censorship (WCC). That being said, WCC is (in some sense) weaker than SCC because the former allows the singularity to be \textit{visible to nearby observers}, while the latter does not allow the singularity to be \textit{visible to any observer}. Singularities that violate WCC are termed \textit{globally naked}, while those that violate SCC are termed \textit{locally naked}. Fig. \ref{fig1} illustrates these notions for the Lemaitre-Tolman-Bondi (LTB) inhomogeneous dust collapse scenario \cite{Joshi:2013xoa}.\\ 

\noindent The current form of the CCCs indicate that they have been introduced to resolve two issues. The first issue concerns the existence of naked singularities and their relation to the breakdown of predictability \cite{PhysRevD.14.2460}; while the second one concerns the confusion regarding the meaning of the word \textit{singularity} as demanded by the Hawking-Penrose singularity theorems. Indeed, in their current form, the cosmic censorship conjectures deal with the existence of maximal globally hyperbolic inextendible developments of generic initial data. In other words, incomplete geodesics must be indicative of genuine spacetime singularities through which the spacetime itself is inextendible. Additionally, these singularities must be covered by a horizon.\\

\begin{figure}[t]
 			\begin{center}
				\includegraphics[width=120mm,scale=0.5]{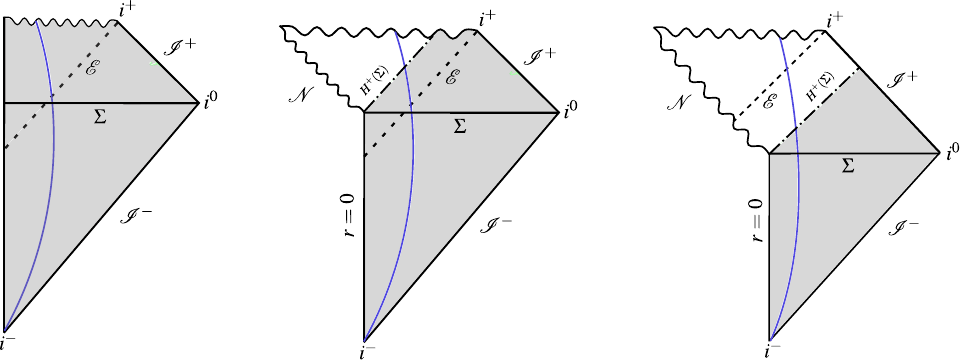}
			\end{center}
   \vspace{1em}
       		\caption{ \textcolor{black}{On the LHS, we have the conformal diagram for the well known Oppenheimer-Synder-Datt homogeneous dust collapse, which results in the formation of a black hole. Moreover, in the center we have the conformal diagram for the LTB inhomogeneous dust collapse scenario with a locally naked singularity, whereas on the RHS we have the conformal diagram for the LTB inhomogeneous dust collapse scenario with a globally naked singularity \cite{PhysRevD.19.2239,PhysRevD.109.064019}. The shaded regions represent the MGHDs of some initial data hypersurface $\Sigma$. Further, $H^{+}(\Sigma)$ denotes the future Cauchy horizon of $\Sigma$, $\mathscr{E}$ denotes the event horizon, and $\mathscr{N}$ denotes the naked singularity. Note that for the locally naked scenario, $\mathscr{I}^{+}$ is complete and $H^{+}(\Sigma) \cap \mathscr{I}^{+}=\emptyset$, whereas for the globally naked scenario, $\mathscr{I}^{+}$ is incomplete and $H^{+}(\Sigma) \cap \mathscr{I}^{+} \neq \emptyset$. Additionally for both the scenarios, the MGHDs can be extended (to the unshaded region) which then does not admit $\Sigma$ as a Cauchy surface, and hence is not globally hyperbolic. Hence, globally naked singularities violate both WCC and SCC, whereas locally naked singularities only violate the SCC.} }\label{fig1}
\end{figure}

\noindent The aforementioned precise formulations of cosmic censorship conjectures are built on the assumption that the underlying spacetime is asymptotically flat or that it is the MGHD of asymptotically flat initial data. We hold that such an assumption is rather strong. An alternate characterization of the cosmic censorship conjectures can be developed by incorporating the following requirements on physical grounds \cite{joshi2007gravitational}: (i) A suitable energy condition must be obeyed, (ii) the collapse must develop from generic regular initial data, (iii) any singularity arising from a realistic collapse must be gravitationally strong, and (iv) the matter fields must be sufficiently general.\\

\noindent A limitation of the cosmic censorship conjectures is the lack of a unique definition of the term \textit{generic initial data}. It is therefore imperative to consider specific solutions to the Einstein equations that admit naked singularities. These solutions must be studied in order to construct a relevant notion of generic initial data for the CCCs, and then make appropriate judgements about them. From the physical point of view, one may consider the formation of spacetime singularities via the unhindered gravitational collapse of a massive star that has used up its nuclear energy \cite{joshi2007gravitational}. Such studies have, by and large, been conducted by considering various forms of the matter that constitute the collapsing star. Oppenheimer and Snyder \cite{PhysRev.56.455} studied the unhindered collapse of homogeneous dust in 1939, wherein they showed that such a collapse inevitably leads to the formation of black holes, i.e., singularities that are covered by an apparent horizon. In 1984, Christodoulou \cite{Christodoulou:1984mz} showed the existence of naked singularities in inhomogeneous dust collapse. However, Newman \cite{Newman:1985gt} showed that these singularities do not satisfy the strong limiting focussing condition (SLFC) \cite{1985JGP.....2..127C}, which is a sufficient condition for the singularity to be strong (in Tipler's sense) for spherically symmetric spacetimes \cite{Tipler:1977zza}. The strength of a singularity characterizes the continuous inextendibility of the spacetime through it, and hence makes sure that the existence of incomplete geodesics is equivalent to the existence of a genuine spacetime singularity. In 1993, Joshi and Dwivedi \cite{PhysRevD.47.5357} showed the existence of naked singularities in the same setting as Christodoulou, which satisfies Tipler's criterion of strength. In 1994 \cite{Dwivedi:1994qs}, it was shown that the same results hold for all type-I fields in 4 dimensions. Recently, Mosani et al. \cite{PhysRevD.101.044052} extended these results to singularities arising in perfect fluid collapse. \\

\noindent Going beyond general relativity, one may identify the following motivations for studying gravitational collapse in theories assuming a higher dimensional spacetime: (i) It has been suggested \cite{Randall_1999} that we may be living in a lower dimensional surface (called a \textit{brane}) of a higher dimensional space. All the non-gravitational forces are confined to this brane, but gravity is not. This requires a higher dimensional generalization of Einstein gravity \cite{horowitz2005higher}. (ii) Having knowledge of certain properties in higher dimensions could help better understand those properties in four dimensions, for which general relativity was originally constructed. Goswami and Joshi developed the formalism to study the gravitational collapse of spherically symmetric type-I matter field in higher dimensions \cite{PhysRevD.76.084026} and showed the existence of naked singularities. Moreover, Giambo and Quintavalle \cite{Giambo:2007ps} have investigated how the dimension of a spacetime affects the existence of a ``phase transition"\footnote{Here, by the existence of a ``phase transition", we mean the existence of a ``critical" case where the end state of the collapse is determined by the behaviour of a single parameter. between black holes and naked singularities.} However, the strength of such singularities (in the sense of Tipler) has not been examined \cite{Nolan_2003, Nolan:1999tv, Nolan:2000rn, Nolan:1999tw, clarke1993analysis}. (iii) It may be interesting to assess the validity of the cosmic censorship conjectures in higher dimensions. In fact, it has been shown that in some special scenarios, even though cosmic censorship may be violated in four dimensional spacetimes, it is restored beyond five dimensions \cite{joshi2007gravitational}.\\

\noindent In the context of the discussion thus far, our aim here is to probe the conditions for the occurrence of strong curvature naked singularities from spherically symmetric gravitational collapse in an arbitrary, albeit finite number of spacetime dimensions. In particular, we take the matter fields to belong to the general class of type-I matter fields. Choosing this class allows for a unified treatment of previous studies, since dust, tangential pressure, perfect fluid, scalar fields and Chaplygin gases \cite{PhysRevD.66.043507}, among other kinds of matter, are included in this class. Moreover, we only impose mild energy conditions (namely, the dominant and the null energy conditions) on the spacetime from physical considerations. No further assumptions, such as the existence of Killing fields (apart from those induced by spherical symmetry), or self-similarity \cite{Joshi:1992vr,PhysRevLett.59.2137,PhysRevLett.60.241} are imposed. In section \ref{sec2}, we describe the formalism employed to study spherically symmetric gravitational collapse. This includes a discussion of the \textit{interior} and \textit{exterior} spacetimes for the collapsing matter field. Here, we also construct regular initial data in terms of the matter and geometric variables under consideration. In section \ref{sec3}, we discuss necessary and sufficient conditions for the formation of a naked singularity (in any dimension). In section \ref{sec4}, we give a sufficient condition for the naked singularity to be gravitationally strong. Implications for the cosmic censorship conjectures and future directions are discussed in section \ref{sec5}. Throughout the paper (unless mentioned otherwise) we consider a spacetime to be a $N$ dimensional Lorentzian manifold, endowed with an at least $\mathcal{C}^2$ Lorentzian metric with signature $(1,N-1)$.

\section{The Mathematical Model of a Collapsing Star}\label{sec2}

\noindent We model the gravitational collapse of a spherically symmetric matter cloud by a collection of concentric spherical shells $\mathcal{S}^{N-2}_r$ of co-dimension two, identified by its radial coordinate $r$. As time commences, either all these shells contract to a point, or only shells identified by radial coordinates less than or equal to $r_e<r_b$ (where $r_b$ is the radial coordinate of the largest shell) contract to a point, or none of them contract to a point. We refer to the resultant phenomenon in the first case as \textit{unhindered gravitational collapse}, and in the remaining two cases as \textit{hindered gravitational collapse}. The point to which these shells collapse (if they do) is referred to as the \textit{singularity}. The hindered collapse of matter clouds may lead to either a singular equilibrium configuration \cite{Joshi:2011zm}, or a regular equilibrium configuration \cite{mosani2024regular}.\\

\noindent To be precise, we require two separate spacetimes: one that describes the interior collapsing cloud, and the other that describes the surrounding exterior region. These are then matched on a timelike hypersurface representing the boundary of the collapsing cloud, via an appropriate procedure \cite{Israel:1966rt}. In this fashion, we construct a ``larger" spacetime by taking the union of the interior and the exterior spacetime, and the matching hypersurface. In what follows, we briefly review the interior and exterior spacetimes for a spherically symmetric unhindered gravitational collapse. We then proceed to construct and evolve the regular initial data.
 
\subsection{The Interior Spacetime}

\noindent Let $(M,\mathbf{g})$ be a spherically symmetric $N$ dimensional spacetime with $\partial M \eqcolon B \hookrightarrow M$. We call this the \textit{interior spacetime}. Let $\gamma: \mathbb{R}\supseteq I \to M$ be the worldline of a timelike particle constituting the matter field $\mathbf{T} : M\to \mathcal{T}^0_{\ 2} M$ over $M$. For each curve parameter $\lambda \in I$, let $\textbf{v}_{\lambda}\in \mathcal{T}_{\gamma(\lambda)}M$ be the tangent vector to $\gamma$ at the point $\gamma(\lambda)\in M$. Choose a basis $\{\mathbf{e}_a\}=\{\mathbf{e}_1,\dots, \mathbf{e}_N\}$ of each tangent space $\mathcal{T}_{\gamma(\lambda)} M$, such that (i) $\forall\ \lambda \in I: \mathbf{g}_{\gamma(\lambda)}(\mathbf{e}_a, \mathbf{e}_b)=\eta_{ab}$, and (ii) $\mathbf{e}_1 \coloneqq \textbf{v}_{\lambda}$. Here, $\eta_{ab}$ are the components of the $N$ dimensional Minkowski metric. Condition (i) implies that this frame is orthonormal (in the Lorentzian sense), and condition (ii) stipulates that the particle does not move spatially with respect to itself.  Such a choice of basis is called a \textit{comoving frame}, in the sense that the frame moves with the matter. The corresponding choice of basis for the respective cotangent spaces is called the \textit{comoving coframe}.\\

\noindent By definition, a \textit{type-I matter field} admits one timelike and $N-1$ spacelike eigenvectors in a comoving frame. Note that $\mathbf{e}_1$ is timelike, since it is the tangent to a timelike geodesic. Moreover, condition (i) implies that $\mathbf{e}_2, \dots, \mathbf{e}_N$ are all spacelike. We now introduce a chart map $\mathbf{x}:M\to \mathbb{R}^N$ such that $\forall\ p \in M: \mathbf{x}(p)=(\mathbf{x}^1 (p),\dots,\mathbf{x}^N (p))\coloneqq(t,r,\theta^1,\dots,\theta^{N-2})\in \mathbb{R}^N$. Further, we require that for every $1\leq \mu \leq N$, the inverse mapping $ (\mathbf{x}^\mu)^{-1} : \mathbb{R}\supseteq I \to M $ is the integral curve of the vector field generated by $\mathbf{e}_{\mu}$ (as defined above). We call these \textit{comoving coordinates}. For future use, we define that $\forall\ 1\leq i \leq N: \mathbf{X}_i \coloneqq \text{range}(\mathbf{x}^i) \subset \mathbb{R}$, and $\mathbf{X} \coloneqq \mathbf{X}_1 \times \mathbf{X}_2 \subset \mathbb{R}^2$.\\

\noindent The comoving frame may now be denoted by the set $\{\partial_{a} : 1\leq a \leq N\}\subset \mathcal{T}_p M$, and the comoving coframe may be denoted by the set $\{d\mathbf{x}^b : 1\leq a \leq N \} \subset \mathcal{T}_p^* M$. Consequently, one can construct a basis for $\mathcal{T}^0_{\ 2, p} M$ given by $\{d\mathbf{x}^a \otimes d\mathbf{x}^b : 1\leq a,b \leq N \}$. Then, the spherically symmetric metric $\mathbf{g}: M\to \mathcal{T}^0_{\ 2} M $ can be expanded in this basis in terms of three arbitrary functions $\phi: \mathbf{X}\to \mathbb{R}$, $\psi: \mathbf{X}\to \mathbb{R}$ and $R: \mathbf{X}\to \mathbb{R}$ as follows:
\begin{equation}\label{1}
\begin{aligned}
    \mathbf{g}_p &= -e^{2\phi(t,r)}\ dt \otimes dt + e^{2\psi(t,r)}\ dr\otimes dr + \mathbf{h}_p,~\textrm{where}\\
    \mathbf{h}_p &= R^2 (t,r)  \sum_{i=1}^{N-2} \left( \prod_{j=1}^{i-1} \sin^2{\theta^j}\right) d\theta^i \otimes d\theta^i.
    \end{aligned}
\end{equation}
\noindent Here the notation $\mathbf{g}_p$ is used to denote the metric evaluated at a point $p\in M$. Similar notation shall be adhered to for any other tensor field henceforth. Likewise, one can construct a basis $\{\partial_a \otimes d\mathbf{x}^b : 1\leq a,b \leq N \}$ for $\mathcal{T}^1_{\ 1,p} M$. The energy-momentum tensor for a type-I matter field can then be expanded in this basis as follows:
\begin{equation}\label{2}
    \mathbf{T}_p = -\rho(t,r)\ \partial_t \otimes dt + p_r (t,r)\  \partial_r\otimes dr + p_\theta (t,r) \sum_{i=1}^{N-2} \partial_{\theta^i} \otimes d\theta^i. 
\end{equation}
\noindent Here, the maps $\rho: \mathbf{X} \to \mathbb{R}$, $p_r: \mathbf{X} \to \mathbb{R}$, and $p_{\theta} : \mathbf{X} \to \mathbb{R}$ are called the \textit{energy density}, \textit{radial pressure}, and \textit{tangential pressure} of the cloud undergoing gravitational collapse. Additionally, in equation (\ref{1}), the function $R: \mathbf{X} \to \mathbb{R}$ is interpreted as the \textit{physical radius} of the matter cloud. The physical significance of the metric functions $\phi$ and $\psi$ will be clarified later.

\subsection{The Exterior Spacetime}
 
\noindent Let $(\Tilde M,\mathbf{\Tilde{g}})$ be a spherically symmetric $N$ dimensional spacetime with $\partial \Tilde M \eqcolon \Tilde{B} \hookrightarrow \Tilde{M}$. We call this the \textit{exterior spacetime}. This spacetime models the region surrounding the collapsing cloud, which may be done by considering $(\Tilde M,\mathbf{\Tilde{g}})$ to be the \textit{generalized radiating Vaidya spacetime} \cite{Wang_1999}.\\
 
\noindent We introduce a different chart $\mathbf{y}:\tilde{M} \to \mathbb{R}^N$, such that $\forall\ p \in \tilde{M}: \mathbf{y}(p) =(\mathbf{y}^1,\dots,\mathbf{y}^N)(p)= (t_v, r_v, \theta_v^1, \dots, \theta_v^{N-2})\in \mathbb{R}^N$, where $t_v$ is called the \textit{retarded} (or \textit{exploding}) \textit{time}, $r_v$ is called \textit{generalized Vaidya radius}, and for each $1\leq i \leq N-2$, the numbers $\theta^i_{v}$ are the \textit{angular coordinates} for the generalized Vaidya spacetime. For future use, we define that $\forall 1\leq i \leq D: \mathbf{Y}_i \coloneqq \text{range}(\mathbf{y}^i) \subset \mathbb{R}$, and $\mathbf{Y} \coloneqq \mathbf{Y}_1 \times \mathbf{Y}_2 \subset \mathbb{R}^2$. In the corresponding coordinate basis, the exterior generalized Vaidya metric $\tilde{\mathbf{g}}: \tilde{M} \to \mathcal{T}^0_{\ 2} \tilde{M}$ is written as \cite{PhysRevD.92.024041},
 \begin{equation}\label{3}
 \begin{aligned}
\tilde{\mathbf{g}}_p &= -\left(1-\frac{2 \mathfrak{M}(t_v,r_v)}{r_v^{N-3}}\right)\ dt_v \otimes dt_v - 2\ dt_v \otimes dr_v + \tilde{\mathbf{h}}_p,~\textrm{where}  \\
\tilde{\mathbf{h}}_p &=  r_v^2  \sum_{i=1}^{N-2} \left( \prod_{j=1}^{i-1} \sin^2{\theta_v^j}\right) d\theta_v^i \otimes d\theta_v^i.
\end{aligned}
 \end{equation}
\noindent Here, $\mathfrak{M}: \mathbf{Y} \to \mathbb{R}^{+}_0$ is called the \textit{generalized Vaidya mass}. The exterior spacetime must be ``glued" to the interior spacetime in order to form one total spacetime describing the phenomenon of gravitational collapse.\\

 \noindent We construct the total spacetime $(\mathscr{M},\mathfrak{g})$ from $(M,\mathbf{g})$ and $(\Tilde{M},\Tilde{\mathbf{g}})$ as follows \cite{Mena:2012zza}:
\begin{enumerate}
    \item $\mathscr{M} \cong M \sqcup \Tilde{M}$ as Lorentzian manifolds. Moreover, for every $p\in M$, we have $\mathfrak{g}_p \coloneqq \mathbf{g}_p$ and for every $q \in \tilde{M}$, we have $\mathfrak{g}_q \coloneqq \mathbf{g}_p$.
    \item Identify points in $B$ with points in $\Tilde{B}$ such that $B \cong \tilde{B}$ as differentiable manifolds. We may then say that these boundaries are diffeomorphic to a $(N-1)$ dimensional manifold $\mathscr{S}$.
    \item The following junction conditions given by Israel \cite{Israel:1966rt} must hold: 
    \begin{equation}\label{4}
    \begin{aligned}
       \leftidx{^B}{\mathbf{g}}{} &\equiv \leftidx{^{\Tilde B}}{\Tilde{\mathbf{g}}}{}, ~\textrm{and}\\
   \mathbf{K}& \equiv \Tilde{\mathbf{K}}.
    \end{aligned}
    \end{equation}
   \noindent  Here $\leftidx{^{B}}{\mathbf{g}}{}$ denotes the metric induced on $B \hookrightarrow M$ and $\leftidx{^{\Tilde B}}{\Tilde{\mathbf{g}}}{}$ denotes the metric induced on $\tilde{B} \hookrightarrow \Tilde{M}$. Additionally, the maps $\mathbf{K} : B \to \mathcal{T}^0_{\ 2} B$ and $\tilde{\mathbf{K}}:\tilde{B}\to \mathcal{T}^0_{\ 2} \tilde{B}$ are defined pointwise in the following manner:
\begin{equation}\label{5}
    \begin{aligned}
  \forall\ p\in B;\  \mathbf{K}_p:& \mathcal{T}_pB \times \mathcal{T}_pB \to\mathbb{R}\\
  &(X,Y)\mapsto \mathbf{g}_p\left(\nabla_X Z, Y\right), ~\textrm{and}\\
\forall\ q\in \tilde{B};\ \Tilde{\mathbf{K}}_q: & \mathcal{T}_q\tilde{B} \times \mathcal{T}_q\tilde{B} \to\mathbb{R}\\
&(X,Y)\mapsto \Tilde{\mathbf{g}}_q \left(\nabla_X Z, Y\right).
\end{aligned}
\end{equation}
\noindent These are the extrinsic curvatures (or second fundamental forms) of $B \hookrightarrow M$ and $\tilde{B} \hookrightarrow \Tilde{M}$ respectively. Here $Z\in \Gamma(TN)$ is the global unit normal vector field to $\mathscr{S}$.\\
\end{enumerate}
\color{black}
\noindent \textcolor{black}{Israel's conditions imply that there are no finite or infinite jump dicontinuities in the metric of the total spacetime and the extrinsic curvatures of the matching hypersurface, as one moves from the interior to the exterior}. These conditions were used in $N=4$ case to match interior perfect fluid seeded spacetime with the exterior generalized Vaidya spacetimes in \cite{PhysRevD.76.084026}. Giambo and Quintavalle \cite{Giambo:2007ps} have generalized the procedure to $N$ dimensions, and as such we will not focus on this aspect here.


\subsection{Construction of Regular Initial Data}

\noindent From the dynamical point of view, the problem of gravitational collapse of a spherically symmetric matter cloud can be treated as an initial value problem, wherein regular initial data defined on an initial spacelike hypersurface $\Sigma_{t_i} \hookrightarrow M$ characterized by $t=t_i$, is evolved in the future subject to the Einstein equations. Moreover, one would like to impose some additional constraints on the initial data in terms of some physically motivated conditions. One of these restrictions is that the metric components $g_{ab} : \mathbb{R}^{N} \to \mathbb{R}$ in the chosen basis must be at least $\mathcal{C}^2$ everywhere. This is required to solve the Einstein equations in the first place. \\

\noindent Additionally, we require the stress-energy tensor to satisfy the \textit{weak energy condition} (WEC) and the \textit{null energy condition} (NEC), which are together given by the following pointwise defined condition,
\begin{equation}\label{6}
     \forall\ \mathsf{V} \in \mathcal{T}_p M : ~\lVert \mathsf{V} \rVert_p \leq 0 \implies \mathbf{T}_p(\mathsf{V},\mathsf{V}) \geq 0.
\end{equation}
\noindent Here, $\lVert \mathsf{V} \rVert_p \coloneqq \mathbf{g}_p (\mathsf{V},\mathsf{V})$. Moreover, we also ensure the satisfaction of \textit{dominant energy condition} (DEC), which requires that the WEC holds in addition to the following pointwise defined condition:
\begin{equation}\label{7}
    \forall\ \mathsf{V} \in \mathcal{T}_p M: ~\lVert \mathsf{V} \rVert_p <0 \implies \lVert (\mathbf{T}_p(\mathsf{V},\cdot))^{\sharp_p} \rVert_p \leq 0.
\end{equation}
\noindent Here, $(\cdot)^{\sharp_p}$ denotes the pointwise evaluation of the globally defined musical isomorphism \cite{sasane2022mathematical} $\sharp: \mathcal{T}^* M \to\mathcal{T}M$. These energy conditions are rather mild, physically motivated conditions, conveying the idea that the energy density along light trajectories, and as measured by an observer locally, is non-negative, in addition to the notion of causal matter flow. Furthermore, no other energy conditions are imposed. For the specific form of energy-momentum tensor given in equation (\ref{2}), the DEC and NEC are together equivalent to the satisfaction of the following condition:
\begin{equation}\label{8}
\begin{aligned}
    \forall\ &(t,r) \in \mathbf{X}: \Delta_{(t,r)} \subset \mathbb{R}^{+}_0, ~\textrm{where}  \\
   \Delta_{(t,r)} &\coloneqq \{  \rho (t,r), ~(\rho + p_r)(t,r), ~(\rho + p_\theta) (t,r), ~(\rho-|p_r|) (t,r), ~(\rho - |p_\theta|)(t,r) \}.
\end{aligned}
\end{equation}

\noindent Now we define a particular notion of mass for the spacetimes under consideration. As we shall see later, this mass function can be used to characterize the apparent horizon.\\

\begin{definition}\label{d1}
    (Misner-Sharp mass function, Giambo et. al. \cite{Giambo:2002tp})\\
    \noindent For a $N-$dimensional spherically symmetric spacetime $(M, \mathbf{g})$, with $\mathbf{g}$ given by equation (\ref{1}), the Misner-Sharp mass function is a map $F: \mathbf{X} \to \mathbb{R}$ defined by,
    \begin{equation}\label{9}
        F(t,r) \coloneqq R^{N-3}(t,r) (1-\mathbf{g}^{\sharp}_{\ p}(d R, d R)) (t,r).
    \end{equation}

    \noindent Here, $\forall\ p \in M: \mathbf{g}^{\sharp}_{\ p} : \mathcal{T}^*_p M \times \mathcal{T}^*_p M \to \mathbb{R}$ is the scalar product on each cotangent space, and will be referred to as the inverse metric.\\
\end{definition}

\begin{remark}\label{r1}
    \noindent In the context of the Hamiltonian formulation of relativity, one can define various notions of mass \cite{Faraoni:2020mdf}. For instance, one has the Hawking quasi-local mass \cite{Hawking:1968qt}, the Arnowitt-Deser-Misner (ADM) mass \cite{Arnowitt:1962hi} and the Bondi-Sachs (BS) mass \cite{Szabados:2009eka}. The Misner-Sharp mass is equivalent to the Hawking quasi-local mass under the assumption of spherical symmetry. Moreover, the ADM mass and the BS mass are equivalent to the Misner-Sharp mass in the limit of approach to spatial infinity and null infinity respectively \cite{PhysRevD.49.831}.\\
\end{remark}

\noindent With these definitions in mind, for the metric given in equation (\ref{1}), the Einstein equations can be written in the following form:
\begin{subequations}
    \begin{align}
        \rho(t,r) &= \frac{(N-2) F'(t,r)}{2 R^{N-2}(t,r) R'(t,r)}, \label{10a}\\
        p_r(t,r) &= - \frac{(N-2)\dot{F}(t,r)}{2 R^{N-2}(t,r) \dot{R}(t,r)},\label{10b}\\
        \phi'(t,r) &=\frac{(N-2)(p_\theta (t,r)-p_r (t,r))}{\rho(t,r)+p_r (t,r)}\frac{R'(t,r)}{R(t,r)} - \frac{p'_r (t,r)}{\rho(t,r)+p_r (t,r)},\label{10c}\\
         \dot{R}'(t,r) &=\frac{1}{2}\left(\dot{R}(t,r) \frac{H'(t,r)}{H(t,r)} + R'(t,r)\frac{\dot{G}(t,r)}{G(t,r)}\right). \label{10d}
    \end{align}
\end{subequations}
\noindent Here, we have used the notations,
\begin{equation}\label{11}
        F'(t,r) = \frac{\partial F}{\partial r}\bigg|_{t} (t,r); \ \ \ \dot{F}(t,r) = \frac{\partial F}{\partial t}\bigg|_{r} (t,r).
\end{equation}
\noindent The same notations applies to the physical radius $R$. Moreover, the functions $G: \mathbf{X}\to \mathbb{R}$ and $H: \mathbf{X}\to \mathbb{R}$ are defined as,
\begin{equation}\label{12}
    G(t,r) \coloneqq e^{-2\psi(t,r)} (R')^2(t,r); \ \ \ H(t,r) \coloneqq e^{-2\phi(t,r)} (\dot{R})^2 (t,r).
\end{equation}
\noindent In terms of these functions, equation (\ref{9}) gives,
\begin{equation}\label{13}
    F(t,r) = R^{N-3} (t,r) (1-G(t,r) + H(t,r)).
\end{equation}
\noindent In the context of gravitational collapse of a dust cloud, the Misner sharp mass function is to be interpreted as the quasi-local mass contained inside a spherical matter shell of radius $r$ at time $t$. In fact, from equation (\ref{10a}), we find that at a given time $t$,
\begin{equation}\label{14}
    F(t,r) = \frac{2}{N-2}\int_0^{R}\ d\bar{R}\ \bar{R}^{N-2} \rho.
\end{equation}
\noindent It is possible for the Misner-Sharp mass function to assume negative values a priori. However, in the context of the positive energy theorem(s) of classical relativity \cite{Schoen:1979zz,Witten:1981mf}, the Misner-Sharp mass corresponds to the mass of the spacetime. As a consequence, it is necessary that the Misner-Sharp mass function assume only non-negative values, i.e., 
\begin{equation}\label{15}
    \text{range}(F) \subseteq \mathbb{R}^+_0.
\end{equation}
 \noindent As we shall see later, the Misner-Sharp mass function can be used to characterize the apparent horizon, much like the Schwarzschild mass can be used to characterize the event horizon, which is equivalent to the apparent horizon for the Schwarzschild spacetime. In fact, the Schwarzschild mass is equivalent to the Misner-Sharp mass function for the Schwarzschild spacetime. Using equations (\ref{6}) and (\ref{10a}), the WEC and NEC can be formulated along any causal curve as,
\begin{equation}\label{16}
     \forall\ (t,r) \in \mathbf{X}: ~\{ F'(t,r), ~R'(t,r) \} \subset \mathbb{R}^{+}_0.
\end{equation}
\noindent Some restrictions on the behaviour of the physical radius $R(t,r)$ are in order. The behaviour of $\dot{R}:\mathbf{X} \to \mathbb{R}$ determines whether the matter cloud undergoes unhindered gravitational collapse, hindered gravitational collapse, or expands. Clearly, for the matter field to undergo unhindered gravitational collapse, we must have,
\begin{equation}\label{17}
   \text{range}(\dot{R})\subseteq \mathbb{R}^-.
\end{equation}
\noindent If one has $\dot{R}(t_f,r) \geq 0$ at some later time $t=t_f$, the collapsing matter would either halt or bounce back. We do not concern ourselves with such cases. Instead, we focus on the scenario wherein all the matter shells completely collapse to a singularity. Now, for reasons to be specified later, we introduce a scaling function for the physical radius.\\

\begin{definition}\label{d2}
    (Scaling function)\\
    The scaling function is a map $v: \mathbf{X}\to (0,1]$ defined by,
 \begin{equation}\label{18}
    v(t,r) \coloneqq \frac{R(t,r)}{r}.
\end{equation}
\end{definition}

\begin{remark}\label{r2}
    The scaling function $v(t,r)$ is a monotone decreasing function of $t$ throughout the gravitational collapse. This allows us to use $v$ as a time-coordinate. In other words, we employ the coordinate transformation $t \mapsto v(t,r)$ for a fixed $r$. For the remainder of the paper, while referring to $v$ as a coordinate, we shall (abusively) drop the argument $(t,r)$. Moreover, a function $f(t,r)$ in the $(t,r)$ plane will be denoted in the $(v,r)$ plane as a corresponding transformed function $\tilde{f}(v,r)$.\\
\end{remark}

\noindent We now recall some useful definitions from the theory of spherically symmetric gravitational collapse.\\

\begin{definition}\label{d3}
    (Epoch)\\
    Let $(M,\mathbf{g})$ be the interior spacetime. A spacelike hypersurface $\Sigma \subset M$ with normal $\partial_t$, defined by some $t=t_1 \in \mathbf{X}_1$, or equivalently by some $v=v_1 \in (0,1]$ is called a (regular) epoch.\\
\end{definition}

\begin{definition}\label{d4}
    (Time curve) \\
    A map $\tilde{t} : (0,1] \times \mathbf{X}_2 \to \mathbb{R}$ describing the time $\tilde{t}(v,r)$ for a matter shell identified by a radial coordinate $r$ to reach the epoch defined by a constant $v\in(0,1]$ is called the time curve.\\
\end{definition}

\begin{definition}\label{d5}
    (Singularity curve)\\
A singularity curve is a map $t_s: \mathbf{X}_2 \to \mathbb{R}$ defined in terms of the time curve $\tilde{t}$, as,
    \begin{equation}\label{19}
        \forall r\in \mathbf{X}_2 : 
        t_s(r) \coloneqq \lim_{v\to 0} \tilde{t}.
    \end{equation}
\end{definition} 

\noindent Clearly, the singularity curve describes the time taken by a matter shell identified by a radial coordinate $r$ to approach the \textit{singular epoch}, i.e., the epoch defined in the limit $v\to 0$, or equivalently $t \to t_s (r)$.\\

\begin{definition}\label{d6}
    (Shell-crossing singularity, Szekeres and Lun \cite{Szekeres:1995gy})\\
    A spacetime $(M,\mathbf{g})$ is said to admit a shell-crossing singularity if either of the following conditions hold:
    \begin{enumerate}[(i)]
        \item Let $\gamma: [\lambda_0,0) \to M$ be an incomplete causal curve which approaches the singularity as the curve parameter $\lambda \to 0$. If $J: [\lambda_0,0) \to \mathcal{T}M$ such that for every $\lambda \in [\lambda_0,0)$, $ J(\lambda) \in \mathcal{T}_{\gamma(\lambda)}M$ constitutes a Jacobi field along $\gamma$, then in a parallely propagated (pp) orthonormal frame along $\gamma$, its components $J^a \in \mathcal{C}^k (M)$ have finite non-zero values as $\lambda \to 0$.
        \item It can be covered by a set of $\mathcal{C}^1$ regular boundary points, i.e., it is removable by a $\mathcal{C}^1$ chart transition.\\
    \end{enumerate}
\end{definition}

\noindent The second part of the above definition employs the abstract boundary formulation of Scott and Szekeres \cite{Scott:1994sn} to identify singularities as boundary points attached to spacetime. \\
\begin{definition}\label{d7}
    (Shell-focussing singularity)\\
   A spherical matter shell identified by a radial coordinate $r\in \mathbf{X}_2$ is said to terminate at a shell-focussing singularity if its physical radius $R$ vanishes in the limit of approach to the singular epoch, i.e., if,
   \begin{equation}\label{20}
       \lim_{t\to t_s (r) } R = 0.
   \end{equation}
\end{definition}
\noindent From equation (\ref{10a}), it is seen that the density $\rho$ diverges in the limit $R(t,r) \to 0$ and/or $R'(t,r) \to 0$. Clearly, $R(t,r) = 0$ indicates the vanishing of all matter shells, and $R'(t,r) \geq 0$ implies a crossing or intersection of distinct matter shells. It has been shown previously \cite{PhysRevD.47.5357,Nolan_2003} that the spacetime can be extended through the singularities characterized by $R'(t,r) \to 0$ and $R(t,r) \to c\in \mathbb{R}$. Hence, such singularities can be classified as shell-crossing following definition \ref{d6}. On the other hand, the singularities characterized by $R(t,r) \to 0$ and $R'(t,r) \to 0$ are simultaneously shell-crossing and shell-focussing following definitions \ref{d6} and \ref{d7} respectively. In a physically sound gravitational collapse scenario, one would expect that matter shells identified by smaller values of $r$ collapse prior to those identified by larger values of $r$. Hence we would like to avoid the formation of shell-crossing singularities before the final shell-focussing singularity by construction. The behaviour of $R'$ is thereby restricted as,
\begin{equation}\label{21}
    \text{range}(R') \subseteq \mathbb{R}^+_0.
\end{equation}
\noindent This condition ensures that the physical radius cannot exhibit a non-linear response to the comoving radius, on the initial hypersurface.\\

\begin{lemma}\label{l1}
    \textit{Let $\beta\in \mathbb{R}$. For a complete gravitational collapse without initial shell-crossings, from a regular initial hypersurface $\Sigma_{t_i}$ subject to the Einstein equations, we have,}
\begin{equation}\label{22}
    R(t_i,r) = r^\beta \implies \beta = 1.
\end{equation}
\end{lemma}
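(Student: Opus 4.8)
The plan is to substitute the ansatz $R(t_i,r)=r^{\beta}$ into the scaling function $v$ of Definition~\ref{d2} and to extract the admissible exponents purely from the behaviour near the center $r\to 0^{+}$. Three ingredients will carry the argument: (i) the central shell $r=0$ lies on (or is a limit point of) the regular slice $\Sigma_{t_i}$ and has vanishing areal radius, $R(t_i,0)=0$; (ii) by Definition~\ref{d2}, $v=R/r$ is a genuine map $\mathbf{X}\to(0,1]$, so $0<v(t_i,r)\le 1$ for every $r\in\mathbf{X}_2$; and (iii) since $\Sigma_{t_i}$ is a \emph{regular} epoch it is not the singular epoch, which by Definition~\ref{d5} and Remark~\ref{r2} is reached only in the limit $v\to 0$. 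Feeding the ansatz into $v$ gives $v(t_i,r)=r^{\beta-1}$, so the whole proof reduces to analysing this monomial near $r=0$.

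First I would note that $R(t_i,0)=\lim_{r\to 0^{+}}r^{\beta}=0$ already forces $\beta>0$: for $\beta\le 0$ the areal radius of the central shell is positive or infinite, contradicting spherical symmetry and the $\mathcal{C}^{2}$-regularity of (\ref{1}) there (and, for $\beta=0$, one also has $R'(t_i,\cdot)\equiv 0$, whence (\ref{10a}) gives an infinite density). Next, the no-initial-shell-crossing condition (\ref{21}), i.e.\ $R'(t_i,r)=\beta r^{\beta-1}\ge 0$, is then automatically satisfied, so this step merely confirms consistency rather than constraining $\beta$. The decisive step is the limit of $v(t_i,r)=r^{\beta-1}$ as $r\to 0^{+}$: if $0<\beta<1$ then $r^{\beta-1}\to+\infty$, so $v$ fails to be $(0,1]$-valued near the center, contradicting Definition~\ref{d2}; if $\beta>1$ then $r^{\beta-1}\to 0$, i.e.\ the central shell already sits at the singular epoch $v=0$ on $\Sigma_{t_i}$, contradicting that $\Sigma_{t_i}$ is a regular hypersurface. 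Only $\beta=1$ survives, and then $v(t_i,\cdot)\equiv 1\in(0,1]$ is consistent with every hypothesis.

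As an independent cross-check one can argue directly from the metric: by (\ref{12})--(\ref{13}) one has $g_{rr}(t_i,r)=e^{2\psi(t_i,r)}=(R')^{2}/G=\beta^{2}r^{2\beta-2}/G(t_i,r)$, and the elementary-flatness condition $G(t_i,0)=1$ (absence of a conical singularity at the center, part of the regularity of the initial data) makes $g_{rr}(t_i,\cdot)$ finite and nonzero at $r=0$ precisely when $2\beta-2=0$; for $\beta\ne 1$ the $(r,r)$ component would vanish or blow up, again violating the $\mathcal{C}^{2}$-regularity of (\ref{1}). The hard part is not the computation but the bookkeeping of the conventions of Section~\ref{sec2}: one must be sure that $\inf\mathbf{X}_2=0$ (the cloud contains its center) and that ``regular initial hypersurface'' is to be read as ``$t_i$ is a regular epoch, so $v(t_i,\cdot)>0$'', after which the claim is the elementary limit above. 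Note finally that the Einstein equations enter only through the constitutive relation (\ref{10a}) and the definitions (\ref{12})--(\ref{14}) of $F$, $G$, $H$; the evolution equation (\ref{10d}) is not needed.
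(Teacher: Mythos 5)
Your route through the scaling function has a circularity problem, and the $\beta>1$ branch is where it bites. The paper's own proof is essentially the argument you relegate to a ``cross-check'': from $R'(t_i,r)=\beta r^{\beta-1}$, if $\beta<1$ then $R'(t_i,0)\to\infty$, violating the requirement that the metric components (here $e^{2\psi}=(R')^2/G$, and the angular components built from $R^2$) be at least $\mathcal{C}^2$ so that the Einstein equations can be solved; while if $\beta>1$ (or $\beta\le0$) then $R'(t_i,0)=0$, which in this framework \emph{is} an initial shell-crossing and is excluded by hypothesis. Your main argument instead leans on Definition \ref{d2} and on the identification of $v=0$ with the singular epoch, and both of these are conventions sitting downstream of the very normalization the lemma is meant to justify: the codomain $(0,1]$ of $v$ encodes $R(t,r)\le r$, i.e.\ it presupposes $v(t_i,\cdot)=1$ up to the scaling freedom, and the characterization ``shell-focussing singularity $\iff v\to0$'' is established only after the choice $R(t_i,r)=r$ and the form (\ref{24}) of the mass function (see (\ref{23})--(\ref{25})). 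So ruling out $0<\beta<1$ by the codomain of $v$ assumes part of what is to be shown, and the $\beta>1$ step is not a valid inference at this stage: on any regular epoch the center has $R=0$, and $\lim_{r\to0}v(t_i,r)=0$ does not by itself place $\Sigma_{t_i}$ ``at the singular epoch'' --- indeed $R(t_i,r)=r^\beta$ with $G\equiv1$ is intrinsically just flat initial data in a relabeled radial coordinate, so no curvature irregularity can be inferred from this alone.

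What actually kills $\beta>1$ is the hypothesis you explicitly set aside. Since $R(t_i,0)=0$, one has $v(t_i,0)=\lim_{r\to0}R(t_i,r)/r=R'(t_i,0)$, so your observation ``$v\to0$ at the center'' is literally the statement $R'(t_i,0)=0$; but because you read (\ref{21}) as permitting $R'=0$, you conclude that the no-shell-crossing hypothesis ``merely confirms consistency.'' In the paper's framework (Definition \ref{d6}, the discussion around (\ref{21}), and Theorem \ref{t1}), absence of initial shell-crossings means $R'(t_i,\cdot)>0$, and that is precisely the ingredient excluding $\beta>1$ (and $\beta\le0$). Your elementary-flatness check does recover the paper's reasoning for $\beta<1$ (blow-up of $e^{2\psi}$, failure of $\mathcal{C}^2$ regularity), but for $\beta>1$ the claim that $g_{rr}(t_i,0)$ must be nonzero needs the same shell-crossing input: vanishing of $g_{rr}$ at the coordinate center is a chart degeneracy, not by itself a geometric pathology. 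The conclusion is correct, but the load-bearing steps either presuppose the normalization or misattribute the contradiction; the repair is to argue directly from $R'(t_i,r)=\beta r^{\beta-1}$, splitting into $R'(t_i,0)=0$ (contradicts no initial shell-crossings) and $R'(t_i,0)\to\infty$ (contradicts $\mathcal{C}^2$ regularity), as the paper does.
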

\begin{proof}
    We have $R'(t_i,r)= \beta r^{\beta -1}$. Suppose $\beta \neq 1$. Then, at $t=t_i$, either $R'(t_i,0) = 0$ or $R'(t_i,0)\to \infty$. The former contradicts the assumption that there are no initial shell-crossings, and the latter contradicts the requirement that the metric functions must be at least $\mathcal{C}^2$ everywhere initially in order to solve the Einstein equations. Hence we must have $\beta = 1$.\\
\end{proof} 

\noindent Using this lemma and the initial scaling freedom for the physical radius, we make the choice $R(t_i,r) \coloneqq r$. We now recall a useful theorem regarding the occurrence of shell-crossing singularities in the context of spherically symmetric gravitational collapse. \\

\begin{theorem}\label{t1}
(Joshi and Saryakar \cite{Joshi:2012ak})\\
Given $\mathbb{R}\ni\epsilon>0$, there exists a number $r_1\in \mathbb{R}^+$ such that for every $r\in[0,r_1]$, we have $R'(t,r) >0$ upto $v(t,r)=\epsilon$.\\
\end{theorem}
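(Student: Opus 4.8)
Here the assertion is, at bottom, a continuity-plus-compactness statement anchored at the initial epoch and at the symmetry axis, and the role of the hypothesis $\epsilon>0$ is precisely to keep us strictly away from the shell-focussing epoch $v=0$. The plan is to pass to the $(v,r)$ chart of Remark \ref{r2}: this is legitimate because $\dot v=\dot R/r<0$ by \eqref{17}, so $v$ is a genuine (monotone) time coordinate along each line $r=\text{const}$, and $v(t_i,r)\equiv 1$ by the scaling choice $R(t_i,r)=r$ fixed via Lemma \ref{l1}. In this chart the region ``up to $v(t,r)=\epsilon$'' is the rectangle $Q_\epsilon:=[\epsilon,1]\times[0,r_b]$, which is compact; moreover, since $\epsilon>0$ while the only singular epoch sits at $v=0$, the whole of $Q_\epsilon$ lies in the regular interior, so $R'$ — built from the at-least-$\mathcal{C}^2$ metric function $R$ — is continuous on $Q_\epsilon$.

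Next I would establish strict positivity of $R'$ on the two ``anchor'' faces of $Q_\epsilon$. On the face $\{v=1\}$, i.e. the initial epoch $t=t_i$, the choice $R(t_i,r)=r$ gives $R'(t_i,r)\equiv 1>0$. On the axis face $\{r=0\}$, observe that $R(t,0)=0$ is forced (by $v\le 1$ of Definition \ref{d2} and continuity, $R(t,0)=\lim_{r\to0^+}r\,v(t,r)=0$), so by differentiability of $R$ in $r$ one has $v(t,0)=\lim_{r\to0^+}R(t,r)/r=R'(t,0)$; since on $Q_\epsilon$ we have $v\in[\epsilon,1]$, this yields $R'(t,0)=v(t,0)\ge\epsilon>0$ along the entire axis segment $[\epsilon,1]\times\{0\}$.

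The conclusion then follows from a tube-lemma argument. The set $U:=\{(v,r)\in Q_\epsilon:R'>0\}$ is open in $Q_\epsilon$ and contains the compact axis segment $[\epsilon,1]\times\{0\}$; covering that segment by finitely many open boxes $(v_i-a_i,v_i+a_i)\times[0,b_i)\subseteq U$ and setting $r_1:=\min_i b_i>0$ gives $[\epsilon,1]\times[0,r_1]\subseteq U$, which is exactly the claim: $R'(t,r)>0$ for all $r\in[0,r_1]$ up to the epoch $v(t,r)=\epsilon$. Equivalently, one may integrate the evolution equation \eqref{10d}, which is first order and linear in $R'$ along each line $r=\text{const}$ with integrating factor $G^{-1/2}$, to obtain
\[
R'(t,r)=\sqrt{\tfrac{G(t,r)}{G(t_i,r)}}\left(1+\tfrac12\int_{t_i}^{t}\sqrt{\tfrac{G(t_i,r)}{G(s,r)}}\,\dot R\,\tfrac{H'}{H}\,ds\right),
\]
and then check that the bracketed factor is continuous on $Q_\epsilon$ and strictly positive at $r=0$, hence positive for $r$ small by continuity and compactness.

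I expect the only real subtlety — and the reason the hypothesis $\epsilon>0$ is there — to be the claim that the chart $t\mapsto v$ and the function $R'$ are well behaved on all of $Q_\epsilon$, \emph{including} its edge $v=\epsilon$: this is precisely the statement that no singular (nor any shell-crossing) epoch intervenes before $v=\epsilon$, which is what makes Theorem \ref{t1} non-vacuous, and it rests on the regularity of the interior solution on $\{v>0\}$. A lesser point is the degeneration of $v=R/r$ at the axis, which is disposed of by the center-regularity identity $v(t,0)=R'(t,0)$ used above, so that the axis face is genuinely part of the compact domain on which $R'$ is continuous and strictly positive.
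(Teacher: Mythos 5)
Note first that the paper does not prove Theorem~\ref{t1} at all: it is imported verbatim from the cited reference (Joshi--Saraykar), whose argument is precisely the one you give, namely writing $R'=v+rv'$ near the centre, using $v\geq\epsilon$ together with continuity/boundedness of $v'$ on the compact region $\{\epsilon\leq v\leq 1,\ 0\leq r\leq r_1\}$ to force $R'>0$ for small $r$; your axis identity $R'(t,0)=v(t,0)\geq\epsilon$ plus the tube-lemma step is the same continuity-and-compactness argument in only slightly different clothing, and it is correct under the paper's standing regularity assumptions (the $\mathcal{C}^2$ metric, hence continuous $R'$, on the region $v>0$ near the axis). The only caveat worth recording is your ``equivalently'' aside: the integral formula obtained from equation~(\ref{10d}) with integrating factor $G^{-1/2}$ is formally fine, but the factor $H'/H$ is delicate at $r=0$ (where $H=e^{-2\phi}\dot{R}^2\to 0$), so the claimed continuity and positivity of the bracket on the full rectangle would need a separate estimate; since the tube-lemma argument already suffices, this aside is inessential and best dropped or qualified.
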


\noindent This theorem says that given any spherically symmetric unhindered gravitational collapse model,  there always exists a $r_1\in \mathbb{R}^+$ such that there are no shell-crossings present for every $r\in[0,r_1]$, throughout the evolution of collapse upto any arbitrarily close epoch to the final singularity. With this in hindsight, we see that lemma \ref{l1} is a corollary of theorem \ref{t1}.\\

\noindent As an upshot of our discussion thus far, we find that the following conditions on the scaling function describe the spherically symmetric gravitational collapse of matter fields. 
\begin{equation}\label{23}
    v(t_i,r) = 1; \ \ \ \ \lim_{t\to t_s(r)} v = 0; \ \ \ \  \dot{v}(t,r) < 0. 
\end{equation}
\noindent Clearly, $R(t,r)$ vanishes at $r=0$ on all regular epochs. However, $R(t,r)$ also vanishes in the limit of approach to the singular epoch $t\to t_s (r)$. These cases are to be distinguished by choosing a suitable form of the Misner-Sharp mass function $F(t,r)$, such that it vanishes sufficiently fast in the limit $r\to 0$, so that even if $R(t,r)\to 0$ at the center of all regular epochs, the density and curvatures remain finite. On the other hand, by introducing $v(t,r)$ as given by equation (\ref{18}) and (\ref{23}), we get a better characterization of the shell-focussing singularity, namely $v\to 0$ as $t\to t_s(r)$. Note, however that unlike $R(t_i,r)$, we have $v(t_i,r) =1$. Hence the regular and singular epochs are appropriately distinguished. \\

\noindent Moving further, from equation (\ref{10a}), it is seen that the regularity and finiteness of $\rho$ requires that $F'(t_i,r) \sim \mathcal{O}(r^{N-2})$. Hence, the regularity of the initial density profile requires that the Misner-Sharp mass function must have the following general form:
\begin{equation}  \label{24}
    F(t,r) = r^{N-1} \tilde{\mathcal{M}}(v,r).
\end{equation}
\noindent Here, ${\tilde{\mathcal{M}}}: (0,1] \times \mathbf{X}_2 \to \mathbb{R}^{+}$ is at least $\mathcal{C}^1$ for $r=0$ and at least $\mathcal{C}^2$ for $r\in \mathbb{R}^{+}$. Such construction implies that the Misner-Sharp mass vanishes at the centre of the collapsing cloud on all regular epochs, as well as in the limit of approach to the singular epoch, i.e., 
\begin{equation}\label{25}
   \forall\ t \in \mathbb{R}\setminus \{t_{s_0}\}: ~F(t,0) = 0\ ;\ \ \ \ \ \lim_{(t,r) \to (t_{s_0},0)} F = 0. 
\end{equation}
\noindent Here, $t_{s_0} \coloneqq t_s(0)$ and hence $(t,r)\to (t_{s_0},0) \iff (v,r)\to(0,0)$ represents the shell-focussing singularity at the centre, which we shall call the \textit{central singularity}. Since the Misner-Sharp mass vanishes at such singularities, they are sometimes also referred to as \textit{massless singularities} \cite{PhysRevD.47.5357}. As we shall see later, for spherically symmetric gravitational collapse of type-I matter fields only the class of central shell-focussing singularities can be naked. On the other hand, the class of non-central shell-focussing singularities will turn out to be covered by an apparent horizon. \\

\noindent We have now defined on the initial hypersurface $\Sigma_{t_i}$, seven functions as initial data, given by,
\begin{equation}\label{26}
\begin{aligned}
     \phi(t_i,r) &= \phi_0 (r); \ \ \ \psi(t_i,r) = \psi_0(r);\ \ \ R(t_i,r) = r ;\ \ \ F(t_i,r) = r^{N-1} \tilde{\mathcal{M}}(1,r);\\
 \rho(t_i,r) &= \rho_0(r);\ \ \ p_r (t_i,r) = p_{r_0} (r); \ \ \ p_\theta (t,r) = p_{\theta_0} (r). 
\end{aligned}
\end{equation}
\noindent Sufficient conditions to maintain the regularity of $\phi_0 (r)$ are given in \cite{PhysRevD.76.084026}. These sufficient conditions are motivated from physical considerations such as the vanishing of pressure gradient forces and the vanishing of the difference between radial and tangential pressures near the center of the collapsing matter cloud (i.e., the matter behaves like a perfect fluid near its center). These conditions, in addition to those discussed in this subsection, guarantee the regularity of the metric components and the matter density and pressures at the initial epoch.

\subsection{Singular Solutions from Regular Initial Data}

\noindent Having prescribed regular initial data on the spacelike hypersurface $\Sigma_{t_i} \subset M$, we now wish to investigate its future development $(M, \mathbf{g})$. As aforementioned, the complete gravitational collapse of a matter cloud may result in either a black hole or a naked singularity. Our approach here will be to investigate the existence and finiteness of tangents to a non-zero measure set of outgoing causal geodesics in the limit of approach to the shell-focussing singularity. As we shall see, the polarity of their tangents will decide whether the matter cloud collapses into a black hole or a naked singularity.\\

\noindent \noindent It is clear from the Einstein equations (\ref{10a}) to (\ref{10d}) and equation (\ref{11}), that not all of the initial data is independent of each other. Indeed, we have five equations and seven unknowns. One possible approach in such a scenario is to introduce an appropriate equation of state relating the density and pressure profiles \cite{10.1093/mnras/staa1493,Ahmad_2018,Goswami:2004kq}. However, in order not to bring in undue assumptions, we will not fix any equation of state for our analysis. Instead, the formalism employed here allows one to work with and construct general classes of functions that are required to have a certain form by the regularity conditions, much like the form taken by the Misner-Sharp mass function, as dictated by the regularity of the energy density. To begin with, we define a suitably differentiable function $\tilde{A}: (0,1] \times \mathbf{X}_2 \to \mathbb{R}$ such that,
\begin{equation}\label{27}
    \tilde{A},_{v}(v,r) \coloneqq \frac{\partial \tilde{A}}{\partial v} \bigg|_r (t,r) = r\frac{\phi'(t,r)}{R'(t,r)}.
\end{equation}
\noindent The regularity of $\phi':\mathbf{X} \to \mathbb{R}$ will decide the regularity of $A_{,v} : (0,1] \times \mathbf{X}_2 \to \mathbb{R}$. For any function in the $(v,r)$ plane, ($,_v$) denotes its partial derivative with respect to $v$ along an $r$ constant hypersurface and ($,_r$) denotes its partial derivative with respect to $r$ along a $v$ constant hypersurface. Using equation (\ref{27}) in equation (\ref{10d}) and integrating, yields the expression for $\tilde{G}: (0,1] \times \mathbf{X}_2$ as,
\begin{equation}\label{28}
    \tilde{G}(v,r)=b(r) e^{2\tilde{A}(v,r)}.
\end{equation}
\noindent Here $b(r)$ is the integration constant and is called the \textit{velocity profile} \cite{PhysRevD.101.044052} of the collapsing matter shell identified by $r$. The regularity of $\dot{v}: \mathbf{X} \to \mathbb{R}$ at $r=0$ implies that $b(r)$ must have the following form:
\begin{equation}\label{29}
    b(r) = 1+r^2 b_0 (r).
\end{equation}
\noindent $b_0 (r)$ is called the \textit{energy distribution} \cite{PhysRevD.101.044052} within a shell of radius $r$. Equations (\ref{21}), (\ref{27}), (\ref{28}) and (\ref{13}) yield,
\begin{equation}\label{30}
    v^{(N-3)/2}(t,r)\ \dot{v}(t,r) = -e^{\phi(t,r)} \sqrt{\tilde{\mathcal{M}}(v,r) + \frac{v^{N-3}(t,r)(b(r)e^{2\tilde{A}(v,r)}-1)}{r^2}}.
\end{equation}
\noindent Integrating equation (\ref{30}) with respect to $v$ yields the following expression for the time curve:
\begin{equation}\label{31}
\tilde{t}(v,r) = \int_{v}^1 d\bar{v} \frac{e^{-\phi(t,r)}}{\sqrt{\dfrac{\tilde{\mathcal{M}}(\bar{v},r)}{\bar{v}^{N-3}}+\dfrac{b(r)e^{2\tilde{A}(\bar{v},r)}-1}{r^2}}}. 
\end{equation}
\noindent The time curve describes the time taken by a shell labelled by the radial coordinate $r$ to arrive at some epoch defined by constant $v$ from the initial epoch $v=1$. Hence, the time taken by the shell to collapse into the shell-focussing singularity defined by $v\to 0$, i.e., the expression for the singularity curve $t_s : \mathbf{X}_2 \to \mathbb{R}$, is given by,
\begin{equation}\label{32}
t_s(r) \coloneqq \lim_{v\to 0} \tilde{t} = \int_0^1   d\bar{v} \frac{e^{-\phi(t,r)}}{\sqrt{\dfrac{\tilde{\mathcal{M}}(\bar{v},r)}{\bar{v}^{N-3}}+\dfrac{b(r)e^{2\tilde{A}(\bar{v},r)}-1}{r^2}}}. 
\end{equation}
\noindent The existence and polarity of finite tangents $(dt_s /dr)(r)$ of the singularity curve will decide the existence or otherwise of ingoing and outgoing causal geodesics approaching or emanating from the shell-focussing singularity. The existence of such causal curves in turn play a crucial role in deciding whether the corresponding singularities are naked or not. A sufficient condition for such a singularity curve to exist is the existence of an at least $\mathcal{C}^2$ time curve. From equations (\ref{31}) and (\ref{32}), it can be seen that the (at least) $\mathcal{C}^2$ regularity of the functions $\tilde{A}$, $\tilde{\mathcal{M}}$ and $\phi$ is enough to show that the time curve and hence the singularity curve terminate at the shell-focussing singularity with a finite tangent. On the other hand, even if these functions blow up in the limit of the singularity, the time curve remains finite nonetheless. Therefore, regardless of the regularity of the integrand in equation (\ref{31}), the time curve is finite. The reader is referred to the monograph \cite{joshi2007gravitational} for a detailed discussion regarding the conditions for the existence of a $\mathcal{C}^2$ time curve.\\

\noindent The existence of a regular time curve with a finite tangent in the limit of approach to the singularity implies that the collapse terminates into a singularity in a finite amount of time, which is expected from any physically reasonable collapse scenario. Equations (\ref{10a})-(\ref{10c}) along with equations (\ref{24}) and (\ref{27}) yield
\begin{subequations}
\begin{align}
    \tilde{\rho}(v,r) &= \frac{N-2}{2}\left( \frac{(N-1) \tilde{\mathcal{M}}(v,r) + r(\tilde{\mathcal{M}}_{,r} (v,r) +\tilde{\mathcal{M}}_{,v} (v,r) v'(t,r))}{v^{N-2} (v+r v'(t,r))}\right), \label{33a} \\ 
    \tilde{p}_r (v,r) &= -\frac{N-2}{2}\left(\frac{\tilde{\mathcal{M}}_{,v} (v,r)}{v^{N-2}}\right), \label{33b}\\ 
    \tilde{p}_\theta (v,r) &= p_r (v,r) + \frac{R(t,r)}{N-2}\left(\frac{p'_r (t,r)}{R'(t,r)} + \tilde{A}_{,v}(v,r) (\rho(v,r)+p_r(v,r))\right). \label{33c}
\end{align}
\end{subequations}
\noindent It is clear from these equations that the densities and pressures diverge in the limit of approach to the shell-focussing singularity, i.e.,
\begin{equation}\label{34}
    \noindent \lim_{v\to 0} \tilde{\rho} = \infty; \ \ \ \lim_{v\to 0} \tilde{p}_r = \infty;\ \ \ \lim_{v\to 0} \tilde{p}_\theta  = \infty.
\end{equation}
\noindent Hence we have the following well-known result which we write for the sake of completion: \textit{Given spherically symmetric regular initial data for a collapsing type-I matter field, on a spacelike hypersurface $\Sigma_{t_i}$ subject to the DEC, NEC and the Einstein equations, its future development\footnote{Note that we are not using the usual notion of Cauchy developments here, since we do not require the developments to be globally hyperbolic. See \cite{chruściel1991uniqueness} for a definition of general developments.} $(M, \mathbf{g})$ is singular, i.e., there exists at least one incomplete and inextendible curve $\gamma: \mathbb{R}\supseteq I \to M $}. Hence the interior spacetime $(M,\mathbf{g})$ is singular. No comments have been made on the nakedness (or otherwise) of the singularity yet.

\section{Classification of Singular End States}\label{sec3}

\noindent Now that it is guaranteed that the spherically symmetric gravitational collapse of a general type-I matter field must result in either a black hole or a naked singularity, we wish to tackle the problem of finding the conditions for the existence of each of these possible end states. More importantly, we wish to investigate the conditions under which the singularity will be naked. This involves studying trapped surfaces and the apparent horizon.\\ 

\noindent For the sake of completeness, we review some relevant notions from Lorentzian geometry that aid the study of the causal structure of a spacetime.\\

\noindent \begin{definition}
    (Geometric equivalence of curves)\\
    Let $(M,\mathbf{g})$ be a spacetime, and $U\subseteq M$ be open in $M$. Consider the set of geodesics $\Gamma_U \coloneqq \{ \gamma: \mathbb{R}\supseteq I \to U \subseteq M\}$. On this set, we define the geometric equivalence relation: for $\gamma_1$ and $\gamma_2$ $\in \Gamma_U$,  $\gamma_1 \sim \gamma_2 \iff \mathrm{Im}(\gamma_1) = \mathrm{Im}(\gamma_2)$. Two curves related via this relation are called geometrically equivalent.\\
\end{definition}

\noindent \begin{definition} 
(Geodesic congruence)\\
Under the geometric equivalence relation, we consider the quotient space $\Gamma_U/{\sim} \coloneqq \{[\gamma] | \gamma \in \Gamma_U\}$. A geodesic congruence on $U$ is a subset $C \subseteq \Gamma_U /{\sim}$ with the property that there exists a unique $[\gamma] \in C$ for each point $p\in U$. \\
\end{definition}

\noindent The geometric equivalence relation conveys the idea that curves which are reparameterizations of each other, are to be considered as the same curve. Through each point $p\in U$, there exists a unique curve $\gamma \in \Gamma_U$ through $p$, up to an affine reparametrization, such that the curve is part of the congruence. Given a spacelike hypersurface $\Sigma \subset M$, we can define two kinds of geodesic congruences through it.\\

\begin{definition}\label{d9}
    (Ingoing and outgoing null geodesic congruence)\\
   Given a spacetime $(M, \mathbf{g})$, let $U\subseteq M$ be open in $M$, and let $\Sigma\subset M$ be a spacelike hypersurface. Let $\mathsf{T}$ be a ($N-2$) dimensional closed (compact and without boundary) submanifold of $M$ such that $\mathsf{T}\subset \Sigma$. Let $\mathcal{N} \in \Gamma(\mathcal{T}\Sigma)$ be the globally defined unit normal vector field to $\mathsf{T}$. Further, define a set,
    \begin{equation}\label{35}
      \mathcal{T}Q \coloneqq \bigcup_{p\in \mathsf{T}} \mathcal{T}_p M.
    \end{equation}
    \noindent Then, a future-directed vector field $\mathcal{N}^- \in \Gamma(\mathcal{T}Q) $ satisfying $\forall p \in \mathsf{T}: \mathbf{g}_p(\mathcal{N}(p),\mathcal{N}^-(p)) = -1$ is called the ingoing null normal field. Similarly, a future-directed vector field $\mathcal{N}^+ \in \Gamma(\mathcal{T}Q)$ satisfying $\forall p \in \mathsf{T}: \mathbf{g}_p(\mathcal{N}(p),\mathcal{N}^+(p)) = 1$ is called the outgoing null normal field. The integral curves of the ingoing null normal field form the ingoing null geodesic congruence (INGC), and the integral curves of the outgoing null normal field form the outgoing null geodesic congruence (ONGC).\\
\end{definition}

\noindent The reader is referred to fig. \ref{fig2} for a visualization of this concept. Once we have defined a geodesic congruence, we can discuss its behaviour in terms of three parameters - \textit{expansion}, \textit{vorticity} and \textit{shear} \cite{Wald:1984rg}. For the purpose of this paper, we consider only hypersurface orthogonal congruences, which means their vorticity vanishes \cite{Wald:1984rg}. The expansion of a geodesic congruence can be used to define trapped surfaces and the apparent horizon. \\

\begin{figure}[t]
 			\begin{center}
				\includegraphics[width=130mm,scale=0.5]{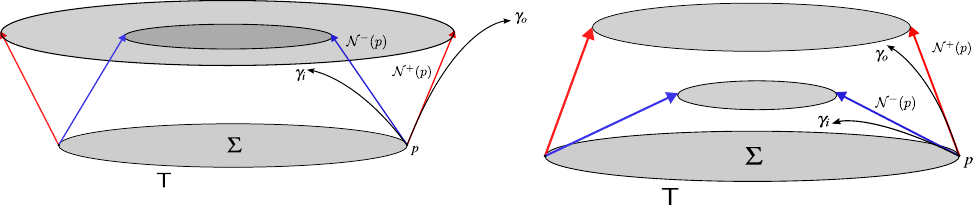}
			\end{center}
   \vspace{1em}
       		\caption{ On the LHS, we have cartoon depicting the concepts of ingoing and outgoing geodesic congruences. At any point $p\in M$, we may define two vectors $\mathcal{N}^{-} (p)$ and $\mathcal{N}^+ (p)$, that form the tangents to ingoing and outgoing curves $\gamma_i$ and $\gamma_o$ through that point. In the theory of gravitational collapse, the terminology ``ingoing" and ``outgoing" is often used for geodesics running into or emanating from a singularity. In the latter case, we refer to the singularity as being naked. On the RHS, we have a cartoon depicting a trapped surface. For such surfaces, ingoing and outgoing geodesics converge in the future (or past).} \label{fig2}
\end{figure}

\begin{definition}
    (Null mean curvature/ expansion)\\
    Let the conditions in definition \ref{d9} hold. Let $\leftidx{^\mathsf{T}}{\mathbf{g}}{}$ be the metric induced on $\mathsf{T}$ from $M$, and for each $p\in \mathsf{T}$, let $\{\mathbf{e}_1,\dots \mathbf{e}_{N-2}\}$ be an orthonormal basis for $\mathcal{T}_p \mathsf{T}$. Then, the ingoing and outgoing null mean curvatures $\Theta^-$ and $\Theta^+$ of $\mathsf{T}$ at $p\in \mathsf{T}$ are defined as,
\begin{equation}\label{36}
    \Theta^{\mp}(p) \coloneqq \mathsf{tr}_{\leftidx{^\mathsf{T}}{\mathbf{g}}{}}(\nabla \mathcal{N}^{\mp}) (p) \coloneqq \sum_{i=1}^{N-2} \leftidx{^\mathsf{T}}{\mathbf{g}}{}_p (\nabla_{\mathbf{e}_i}\mathcal{N}^{\mp}(p),\mathbf{e}_i).
\end{equation}
Moreover, since $\mathcal{N}^{-}$ and $\mathcal{N}^{+}$ are the tangent fields to null geodesics, this is also called the expansion of a null geodesic congruence through $\mathsf{T}$.\\
\end{definition} 

\begin{definition}
   (Trapped surface, marginally trapped surface, MOTS)\\
    Let $\mathsf{T}$ be defined as in definition \ref{d9}. Then,
   \begin{enumerate}[(i)]
   \item    $\mathsf{T}$ is called a trapped surface if  $\ \forall\ p \in \mathsf{T}: \Theta^{\mp}(p) <0$ (see fig. \ref{fig2}).
   \item   $\mathsf{T}$ is called a marginally trapped surface if  $\ \forall\ p \in \mathsf{T}: \Theta^{\mp}(p) \leq 0$.
    \item   $\mathsf{T}$ is called a marginally outer trapped surface (MOTS) if  $\ \forall\ p \in \mathsf{T}: \Theta^{+}(p) = 0.$\\
   \end{enumerate}
\end{definition}

\begin{definition}
  (Edge, partial Cauchy surface, trapped region, apparent horizon)\\ 
   The edge of a spacelike hypersurface $\Sigma \subset M$ is the set of points $p \in \bar{\Sigma} =\Sigma \cup \partial \Sigma$ such that for every open neighbourhood $U \subset \bar{\Sigma}$ of $p$, there exists a causal curve $\gamma: J^{-}(p) \to J^{+}(p)$ that does not intersect $\Sigma$. Further, $\Sigma$ is called a partial Cauchy surface if its edge is empty. Moreover, if $\Sigma$ is a partial Cauchy surface, then a trapped region is a $(N-1)$ dimensional submanifold $\mathscr{T} \subset \Sigma$ whose boundary $\mathscr{A}=\partial \mathscr{T}$ is a MOTS with $\Theta^+ (p) =0$ for every $p \in \mathscr{A}$. The boundary $\mathscr{A}$ is called an apparent horizon. \\
\end{definition}

\noindent For the general spherically symmetric metric $\mathbf{g}$ given by equation (\ref{1}), the expansion for an ONGC can be written (in the comoving coordinates) in terms of the metric components \cite{PhysRevD.109.064019}, using an orthonormal null tetrad as,
\begin{equation}\label{37}
   \forall\ (t,r) \in \mathbf{X}: ~\Theta^{+} (t,r) = \frac{N-2}{R^{N-3}(t,r)}\left(1-\sqrt{\frac{F(t,r)}{R^{N-3}(t,r)}}\right).
\end{equation}
\noindent Hence for a general spherically symmetric metric, the apparent horizon is characterized by the following condition:
\begin{equation}\label{38}
   \forall\ (t,r) \in \mathbf{X}: ~\Theta^{+} (t,r)= 0 \iff F(t,r) = R^{N-3} (t,r).
\end{equation}
\noindent Here $(t,r)$ represents the image of some point $p\in M$ under the comoving chart maps.

\subsection{Conditions for the Formation of Black Holes}

\noindent The apparent horizon has a direct relation with geometric wave optics. This can be seen by realizing that equations (\ref{9}) and (\ref{38}) together require that the eikonal equation $\mathbf{g}^{\sharp}_{\ p}(d R, d R) = 0$, for the general spherically symmetric metric \cite{Altas_2022,Dotti:2023elh} given by equation (\ref{1}) holds. Hence, at the apparent horizon, the gradient of the physical radius is a null covector. If the shell-focussing singularity is formed after the apparent horizon, it ends up being covered by the apparent horizon, and hence in this case, the matter cloud collapses (in finite time) to a black hole. Using equation (\ref{24}), we find the epoch $v=v_{\text{ah}}(r)$ corresponding to the formation of the apparent horizon, is a root of the following equation:
\begin{equation}\label{39}
    v^{N-3} (t,r) = r^2 \tilde{\mathcal{M}}(v,r).
\end{equation}
\noindent Using the time curve given by equation (\ref{31}), we get the expression for the \textit{apparent horizon curve} $t_{\text{ah}}: \mathbf{X}_2 \to \mathbb{R}$ as,
\begin{equation}\label{40}
    t_{\text{ah}}(r) \coloneqq t_s (r) - \int_{0}^{v_{\text{ah}}(r)} d\bar{v} \frac{e^{-\phi(t,r)}}{\sqrt{\dfrac{\tilde{\mathcal{M}}(\bar{v},r)}{\bar{v}^{N-3}}+\dfrac{b(r)e^{2\tilde{A}(\bar{v},r)}-1}{r^2}}}. 
\end{equation}
\noindent The apparent horizon curve describes the time taken by a shell labelled by comoving radius $r$ to cross the apparent horizon. Hence, a sufficient condition for formation of a black hole as the end state of spherically symmetric gravitational collapse is that,
\begin{equation}\label{41}
 \forall\ r\in \mathbf{X}_2: t_{\text{ah}} (r) \leq t_s (r).
\end{equation}
\noindent On the other hand, if $t_{\text{ah}}(r) > t_s (r)$, then there may exist outgoing causal curves terminating in the past at the singularity, that may escape to future infinity. This corresponds to a naked singular end state. In light of the cosmic censorship conjectures of general relativity, these are of immediate interest, and we shall now turn towards an investigation of the formation of naked singularities as end states of gravitational collapse.

\subsection{Conditions for the Formation of Naked Singularities}

\noindent A naked singularity is (in principle) characterized by the existence of at least one past incomplete causal geodesic, i.e., we define a naked singularity by the existence of such a geodesic. However, for an observer, a single (null) geodesic emanating from the singularity corresponds to a single wavefront reaching them. Hence, to have observational and physical relevance, we investigate the conditions for the existence of a family of outgoing radial null geodesics (ORNGs) that constitute a set of non-zero measure. Moreover, it can be shown \cite{joshi2007gravitational} using the causal boundary constructions of Geroch, Kronheimer and Penrose \cite{21afc2b9-b08c-31a5-a52e-6412f7dd10e6}, that the existence of ORNGs emanating from a naked singularity implies the existence of non-radial null, as well as timelike geodesics emanating from the naked singularity. Hence, it is  sufficient to investigate only ORNGs, i.e., the existence of at least one past-incomplete ORNG is sufficient to say that the spacetime is naked singular. Firstly, we note that since the time curve is at least $\mathcal{C}^2$, it may be expanded around $r=0$, using Taylor's theorem as follows:
\begin{equation}\label{42}
    t(v,r) = t(v,0) + \sum_{k=1}^{j\geq 1} r^k \chi_k (v) + \mathscr{O}(r^{j+1}); \ \ \ \noindent \chi_k (v) \coloneqq \frac{1}{k!}\frac{d^k t}{dv^k}\bigg|_{r=0}.
\end{equation}
\noindent For ORNGs, the relation between the tangents is given by,
\begin{equation}\label{43}
   \frac{K^t}{K^r} \coloneqq \frac{dt/d\lambda}{dr/d\lambda} = e^{\psi(t,r)-\phi(t,r)} >0.
\end{equation}
\noindent Here $\lambda \in [\lambda_0,0) \subset \mathbb{R}$ is the affine parameter. We parameterize the geodesics such that $(v,r)\to (0,0)$ implies $\lambda\to 0$, i.e., the central singularity is identified by the vanishing of the affine parameter. In the limit of approach to the singularity, we must have $t\to t_s (r)$ and $R(t,r)\to 0$. We now introduce a new variable $u \coloneqq r^\alpha $, where $\alpha \geq 1$. Then the ORNG equation may be rewritten as, 
\begin{equation}\label{44}
    \frac{dR}{du} (t,r) = \frac{1}{\alpha} \frac{R'(t,r)}{r^{\alpha-1}}\left(1+\frac{\dot{R}(t,r)}{R'(t,r)} \frac{dt}{dr}\right).
\end{equation}
\noindent It is easy to see that provided $\dot{R}(t,r)<0$ and $R'(t,r)>0$ (which they are by construction), we have,
\begin{equation}\label{45}
 \forall (t,r) \in \mathbf{X}:  \frac{dR}{du} (t,r) >0  \implies \frac{dt}{dr} >0 .
\end{equation}
\noindent 
Defining $X(t,r) \coloneqq R(t,r)/ u$, the tangents to ORNGs are described by $dR/du$ in the $(R,u)$ plane. 
\textcolor{black}{Note that one could interpret $\alpha$ as follows: The physical radius $R(t,r)$ is of $\alpha$'th order in $r$ in the limit $t \to t_s(r)$ and $r\to 0$, i.e.,
\begin{equation}
    \lim_{r\to 0}\left(\lim_{t\to t_s(r)} R(t,r)\right) \sim \mathscr{O}\left(r^{\alpha}\right).
\end{equation}}
Using equations (\ref{12}), (\ref{13}) and (\ref{18}) in equation (\ref{44}) yields,
\begin{equation}\label{46}
    \frac{dR}{du} = \frac{1}{\alpha}\left(X + \dfrac{v^{\frac{N-3}{2}}v' r^{\frac{N-3}{2} + 2 -\alpha\left(\frac{N-1}{2}\right)}}{X^{\frac{N-3}{2}}}\right) \left(\frac{1- (F/R^{N-3})}{\sqrt{G}(\sqrt{G}+\sqrt{H})} \right) \eqcolon \mathcal{U }(X,u).
\end{equation}
\noindent Using the Pfaffian differential equation $dv = \dot{v}dt +v'dr = 0$, along a constant $v$ hypersurface, the expression for the term $v^{(N-3)/2}v'$  featuring in equation (\ref{46}) is obtained as,
\begin{equation}\label{47}
\begin{aligned}
    (v^{\frac{N-3}{2}} v')(t,r) &= e^{\psi(t,r)} \sqrt{v^{N-3} (b_0 (r) e^{2\tilde{A}(v,r)}+\tilde{h}(v,r)) +\tilde{\mathcal{M}}(v,r)})\\
    \tilde{h}(v,r) &\coloneqq \frac{e^{2\tilde{A}(v,r)}-1}{r^2}.
\end{aligned}
\end{equation}
\noindent Note that from equation (\ref{46}), we find that,
\begin{equation}\label{48}
  \lim_{(v,r) \to (0,r)} \frac{F}{R^{N-3}} = \infty \implies \lim_{(v,r) \to (0,r)} \frac{dR}{du} = -\infty.
\end{equation}
\noindent If the naked singularly can be characterized by the existence of positive $dR/du$ (in the limit $v\to 0$) as a simple root of the ORNG equation in the limit of approach to the singularity, the following statement holds: \textit{If a singularity is non-central, it is not naked; since in the neighbourhood of the singularity, there do not exist any ORNGs. This is because the tangents to the geodesics are negative.} Owing to the results of theorem \ref{t3} (to be discussed soon), we limit the scope of our paper to the cases where there exist positive values of $dR/du$ that constitute simple roots of the ORNG equation in the limit of approach to the singularity. In other words, we restrict ourselves to central singularities.
Since $X(t,r) \coloneqq R(t,r)/ u$, the tangents to the ORNGs may be expressed near the central singularity, using L'Hospital's rule as,
\begin{equation}\label{49}
 X_0 \coloneqq \lim_{(v,r)\to (0,0)}\frac{R}{u} =\lim_{(v,r)\to (0,0)} \frac{dR}{du}.
\end{equation}
\noindent 
In what follows, we give a criterion for the central singularity to be naked. This generalises the work done in \cite{PhysRevD.47.5357} for the case of spherically symmetric inhomogeneous dust collapse, and the work done in \cite{PhysRevD.76.084026} for the specific cases of spherically symmetric type-I matter collapse with $\chi_1\neq 0$.\\

\begin{definition}\label{d01}
    (Positive root condition)\\
    The future development $(M, \mathbf{g})$ resulting from the $N (>3)$ dimensional spherically symmetric gravitational collapse of a general type-I matter field $($from regular initial data subject to the DEC, NEC and the Einstein equations$)$ is said to satisfy the positive root condition (PRC) if the function $V: \mathbb{R} \supset I\to \mathbb{R}$ defined by,
     \begin{equation}\label{50}
         V(X) = X - \mathcal{U}(X,0),
     \end{equation}
     \noindent admits at least one real positive root $X_0 \in \mathbb{R}^+$. Here,
\begin{equation}\label{51}
\begin{aligned}
 \mathcal{U}(X,0) &\coloneqq \frac{1}{\alpha} \left(X+\frac{A}{X^{\frac{N-3}{2}}}\right) \left(1+\frac{B}{X^{\frac{N-3}{2}}} \right)~\textrm{where}\\ 
     \mathbb{R} \ni A &\coloneqq \sqrt{\tilde{\mathcal{M}}_{00}} \left(\lim_{r\to 0} \left(\sum_{k=1}^{j\geq 1}  k\ r^{(k +1) +\frac{N-3}{2}  -\alpha\left(\frac{N-1}{2}\right)} \chi_k (0) + \mathscr{O}\left(r^{(j+3) +\frac{N-3}{2}  -\alpha\left(\frac{N-1}{2}\right)}\right)\right)\right);\\
    \mathbb{R} \ni B &\coloneqq -\sqrt{\tilde{\mathcal{M}}_{00}}\lim_{r\to 0}r^{\frac{N-1-\alpha(N-3)}{2}}; \ \ \ \tilde{\mathcal{M}}_{00} \coloneqq \lim_{(v,r)\to (0,0)} \tilde{\mathcal{M}}(v,r).
\end{aligned}
\end{equation}
\end{definition}

\begin{remark}\label{rem01}
\noindent  In definition \ref{d01}, for $V$ to be well defined we must have,
\begin{equation}\label{52} 
       \alpha \in \left\{ \frac{2k+N-1}{N-1}: \mathbb{N}\ni k \leq \frac{N-1}{N-3}\right\} 
\end{equation}
\noindent If (\ref{52}) does not hold, either $A$ or $B$ blows up to $\pm \infty$, thereby resulting in an ill-definition of $V$.\\
\end{remark}

\begin{definition}\label{d02}
(Simple positive root condition)\\
    The future development $(M,\mathbf{g})$ is said to satisfy the simple positive root condition (SPRC) if the function $V$ in definition \ref{d01} admits a simple positive real root.\\
\end{definition}

\begin{theorem}\label{t2}
(Positive root theorem - PRT)\\
     If the future development $(M, \mathbf{g})$ resulting from the $N$ dimensional spherically symmetric gravitational collapse of a general type-I matter field $($from regular initial data subject to the DEC, NEC and the Einstein equations$)$ contains at least one past-incomplete ORNG, then $(M, \mathbf{g})$ satisfies the positive root condition.
\end{theorem}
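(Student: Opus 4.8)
\noindent The plan is to start from the hypothesis --- a past-incomplete ORNG $\gamma:[\lambda_0,0)\to M$, affinely parametrized so that $\lambda\to0$ is its singular past endpoint (consistent with the convention that $(v,r)\to(0,0)$ forces $\lambda\to0$) --- and to manufacture from it a positive real root of $V$. The first step is the reduction to the central singularity. If the past endpoint of $\gamma$ were a non-central shell-focussing singularity at $r=r_0>0$, then near it $F\to r_0^{N-1}\tilde{\mathcal M}(0,r_0)>0$ while $R^{N-3}=(r_0v)^{N-3}\to0$, so $F/R^{N-3}\to\infty$ and, by $(\ref{48})$, $dR/du\to-\infty$ along $\gamma$; but a strictly negative $dR/du$ near an endpoint where $R\to0$ is impossible for an outgoing radial geodesic (cf.\ $(\ref{45})$ and the discussion following $(\ref{48})$; see also Theorem \ref{t3}). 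Hence along $\gamma$ one has $(v,r)\to(0,0)$, $t\to t_{s_0}$ and $R\to0$ as $\lambda\to0$.

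\noindent Next, along $\gamma$ I would introduce $u=r^{\alpha}$ and $X=R/u$, with $\alpha\ge1$ the order to which $R$ vanishes in $r$ as $t\to t_s(r),\ r\to0$ (the interpretation of $\alpha$ recorded just after $(\ref{46})$). By that choice $X_0:=\lim_{\lambda\to0}X$ is well-defined, finite, and strictly positive: $X>0$ since $R,u>0$, and finiteness and non-vanishing is precisely what it means for $\alpha$ to be the \emph{exact} order. Moreover $\alpha$ must lie in the admissible set $(\ref{52})$ --- otherwise $A$ or $B$ in $(\ref{51})$, and hence $V$, is infinite; but in that case $\gamma$ cannot reach the singularity with a finite non-zero tangent (Remark \ref{rem01}), contradicting past-incompleteness together with $(\ref{45})$. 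So we may assume $\alpha$ satisfies $(\ref{52})$.

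\noindent The core step is to pass to the limit $\lambda\to0$ in the ORNG equation in the form $(\ref{46})$, $dR/du=\mathcal U(X,u)$. The left side tends to $\lim R/u=X_0$ by L'Hospital's rule, which is equation $(\ref{49})$. For the right side, I would substitute the expansion $(\ref{47})$ of the factor $v^{(N-3)/2}v'$ and the Taylor expansion $(\ref{42})$ of the time curve about $r=0$, and use $(\ref{13})$ to express $H$ via $G=b(r)e^{2\tilde A}$ and $F/R^{N-3}=r^2\tilde{\mathcal M}/v^{N-3}$. Along $\gamma$ one has $v=Xr^{\alpha-1}$, hence $F/R^{N-3}=\tilde{\mathcal M}\,r^{\,N-1-\alpha(N-3)}/X^{N-3}\to B^2X_0^{-(N-3)}$; since $b(0)=1$ and $\tilde A,\tilde{\mathcal M}$ have finite central limits, $\sqrt G\to1$, the factor $\bigl(1-F/R^{N-3}\bigr)/\bigl(\sqrt G(\sqrt G+\sqrt H)\bigr)$ collapses to $1+BX_0^{-(N-3)/2}$, while the remaining factor tends to $\tfrac1\alpha\bigl(X_0+AX_0^{-(N-3)/2}\bigr)$, with $A$ and $B$ precisely the $r\to0$ limits of $(\ref{51})$. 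Therefore $\lim_{\lambda\to0}\mathcal U(X,u)=\mathcal U(X_0,0)$, and the limiting equation reads $X_0=\mathcal U(X_0,0)$, i.e.\ $V(X_0)=X_0-\mathcal U(X_0,0)=0$. Thus $V$ has the positive real root $X_0$, which is the PRC.

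\noindent The bookkeeping in the last step is routine; the genuine obstacles are twofold. First, justifying the interchange $\lim_{\lambda\to0}\mathcal U(X,u)=\mathcal U(X_0,0)$ requires controlling $v,v',\phi,\psi,\tilde A,\tilde{\mathcal M}$ and their first $r$-derivatives \emph{along} $\gamma$ (not merely on $r$- or $v$-constant slices) as one approaches $(0,0)$; this is where the regularity built into the initial data and the $\mathcal{C}^2$ time curve --- which keeps every $\chi_k$ finite --- is really used. Second, one must make rigorous sense of the limits defining $A,B$: the exponents $(k+1)+\tfrac{N-3}{2}-\alpha\tfrac{N-1}{2}$ are non-negative exactly for the admissible $\alpha$ of $(\ref{52})$, and for generic such $\alpha$ only the lowest-order $\chi_k(0)$ survives while the higher terms vanish --- so the precise form $(\ref{52})$ is doing load-bearing work, and this is also the step one expects to be most delicate. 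A subsidiary point, to be settled before assuming $X_0$ exists, is to show from $u\,dX/du=\mathcal U(X,u)-X$ that a past-incomplete outgoing geodesic terminating at $r=0$ forces $X$ to converge, and to converge to a zero of $V$.
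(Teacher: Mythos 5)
Your proposal is correct and follows essentially the same route as the paper's proof: you pass to the limit of the ORNG equation (\ref{46}) at the central singularity, evaluate the two factors using (\ref{42}), (\ref{47}) and $G\to1$, and conclude $X_0=\mathcal{U}(X_0,0)$, i.e.\ $V(X_0)=0$ with $X_0>0$ by (\ref{45}) and (\ref{52}); your limiting factor $1+BX_0^{-(N-3)/2}$ agrees with the paper's (\ref{56}). The extra steps you include (the reduction to the central singularity via (\ref{48}), and the caveats about the admissible $\alpha$ and the existence of the limit $X_0$) are handled in the paper in the discussion preceding the theorem and in Remark \ref{rem01} rather than inside the proof itself.
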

\begin{proof}
    Assume that $(M, \mathbf{g})$ contains at least one past-incomplete ORNG. It will follow the ORNG equation given by equation (\ref{46}). Using equation (\ref{30}) and (\ref{42}) and the Pfaffian differential equation $dv=0$, in the limit of approach to the central singularity, we have,
\begin{equation}\label{53}
    \lim_{(v,r)\to (0,0)} v^{\frac{N-3}{2}} v' = \sqrt{\tilde{\mathcal{M}}_{00}} \left(\left(\sum_{k=1}^{j\geq 1} k\ r^{k-1} \chi_k (0)\right) + \mathscr{O}(r^{j+1})\right).
\end{equation}
\noindent Using equations (\ref{27})-(\ref{29}), we find that,
\begin{equation}\label{54}
    \lim_{(t,r)\to(t_{s_0},0)} G = 1. 
\end{equation}
\noindent Using equation (\ref{54}) in the limiting case of equation (\ref{13}) yields,
\begin{equation}\label{55}
        \lim_{(t,r)\to(t_{s_0},0)} \sqrt{\frac{F}{R^{N-3}}} = 1-\frac{\sqrt{\tilde{\mathcal{M}}_{00}}}{X_0^{\frac{N-3}{2}}} \lim_{r\to0}  r^{\frac{N-1-\alpha(N-3)}{2}} \eqcolon 1+\frac{B}{X_0^{\frac{N-3}{2}}}.
\end{equation}
\noindent Using equations (\ref{53})-(\ref{55}), the ORNG equation (\ref{46}) can be evaluated in the limit of approach to the central singularity, to yield the required equation,
\begin{equation}\label{56}
  \begin{aligned}
     X_0 &= \frac{1}{\alpha} \left(X_0+\frac{\sqrt{\tilde{\mathcal{M}}_{00}}}{X_0^{\frac{N-3}{2}}} \left(\lim_{r\to 0} \left(\sum_{k=1}^{j\geq 1}  k\ r^{(k +1) +\frac{N-3}{2}  -\alpha\left(\frac{N-1}{2}\right)} \chi_k (0) + \mathscr{O}\left(r^{(j+3) +\frac{N-3}{2}  -\alpha\left(\frac{N-1}{2}\right)}\right)\right)\right)\right)\\ & \ \ \ \ \ \times \left(1-\frac{\sqrt{\tilde{\mathcal{M}}_{00}}}{X_0^{\frac{N-3}{2}}} \lim_{r\to0}  r^{\frac{N-1-\alpha(N-3)}{2}}\right)\\
     & \eqcolon \frac{1}{\alpha} \left(X_0+\frac{A}{X_0^{\frac{N-3}{2}}}\right) \left(1+\frac{B}{X_0^{\frac{N-3}{2}}}\right) \eqcolon \mathcal{U}(X_0,0).
\end{aligned}
\end{equation}
\noindent Here, using (\ref{45}), since we have considered ORNGs, we find that $X_0 \in \mathbb{R}^{+}$ provided equation (\ref{52}) holds.\\
\end{proof}
\noindent It is important to note that there may exist a class of values of $X_0$ that satisfy $V(X)=0$, however, it need not be true that each of these values of $X_0$ correspond to a tangent to singular null geodesics terminating at the central singularity in the past. In order to prove that the existence of a positive real root of $V(X)$ is sufficient to declare the central singularity naked, we must therefore assume first that there exists an $X=X_0 \in \mathbb{R}^+$ that satisfies $V(X)=0$, and then check if we can construct outgoing singular null geodesics with $X_0$ as its tangent in $(R,u)$ coordinates. We show this for the special case where $X_0$ is a simple root of $V$.\\

\begin{theorem}\label{t3}
(Simple positive root theorem - SPRT)\\
    If $(M, \mathbf{g})$ satisfies the simple positive root condition, then it contains at least one past-incomplete ORNG.
\end{theorem}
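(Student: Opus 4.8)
The plan is to reverse the direction of reasoning in the PRT: starting from a simple positive root $X_0$ of $V$, I want to explicitly build a past-incomplete ORNG whose tangent in the $(R,u)$ plane tends to $X_0$ as $(v,r)\to(0,0)$. The natural tool is a fixed-point / existence argument for the first-order ODE $dR/du = \mathcal{U}(X,u)$ with the degenerate initial condition $R\to 0$ as $u\to 0$. Concretely, I would pass to the $(X,u)$ variables via $X = R/u$, so that the ORNG equation becomes $u\,\frac{dX}{du} = \mathcal{U}(X,u) - X \eqcolon -V(X) + (\mathcal{U}(X,u)-\mathcal{U}(X,0))$, and then look for a solution curve that approaches the stationary point $X_0$ of the leading-order autonomous part. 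The correction term $\mathcal{U}(X,u)-\mathcal{U}(X,0)$ vanishes as $u\to 0$ (using the estimates (\ref{53})--(\ref{55}) and the hypothesis (\ref{52}) that guarantees $A,B$ are finite), so the equation is a regularly-perturbed autonomous ODE near $u=0$.

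The key steps, in order: (i) Verify that $\mathcal{U}(X,u)$ is $\mathcal{C}^1$ in $X$ on a neighbourhood of $X_0$ for all small $u\ge 0$, and that $\mathcal{U}(X,u)\to\mathcal{U}(X,0)$ uniformly on such a neighbourhood as $u\to 0^+$; this uses the regularity of $\tilde{\mathcal{M}}$, $\tilde A$, $\phi$ established in Section \ref{sec2} together with (\ref{47}). (ii) Since $X_0$ is a \emph{simple} root of $V$, we have $V'(X_0)\neq 0$, equivalently $\frac{\partial}{\partial X}(\mathcal{U}(X,0)-X)\big|_{X_0}\neq 0$; pick the branch on which this derivative is negative (or handle the sign by reparametrizing $u\to 1/u$ if needed) so that $X_0$ is an attracting fixed point of the flow in $-\log u$ ``time''. (iii) Run a standard contraction-mapping / monotone-iteration argument: choose $\delta>0$ with $V$ of one sign on $(X_0-\delta,X_0)$ and the other on $(X_0,X_0+\delta)$, choose $u_*$ small enough that the perturbation $|\mathcal{U}(X,u)-\mathcal{U}(X,0)|$ is dominated by $|V(X)|$ on the annulus $0<u\le u_*$, $|X-X_0|=\delta$, and conclude that the region $\{(u,X): 0<u\le u_*,\ |X-X_0|\le\delta\}$ is positively invariant for the flow toward $u=0$; hence there is an integral curve with $\limsup_{u\to 0}|X-X_0|\le\delta$, and since $\delta$ is arbitrary, $X\to X_0$. (iv) Translate back: $R(u)=uX(u)\to 0$ as $u\to 0$, so the curve reaches the central singularity; check via (\ref{43})--(\ref{45}) that it is genuinely an outgoing radial null geodesic (tangent polarity positive, $dt/dr>0$), and via the affine-parameter normalization $\lambda\to 0 \Leftrightarrow (v,r)\to(0,0)$ that it is past-incomplete. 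This last point may also need a short argument that the affine parameter actually runs down to $0$ along the constructed curve, i.e. that the singularity is reached at finite affine distance — which follows from the finiteness of the time curve noted after (\ref{32}).

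The main obstacle I expect is step (iii): controlling the perturbation term $\mathcal{U}(X,u)-\mathcal{U}(X,0)$ uniformly down to $u=0$ and showing it cannot overwhelm $V(X)$ near a \emph{simple} zero. The simplicity of the root is exactly what makes $|V(X)|\gtrsim |V'(X_0)|\,|X-X_0|$ on a punctured neighbourhood, giving the needed comparison; without simplicity ($V'(X_0)=0$) the perturbation could dominate and the invariant-region argument fails — which is precisely why the theorem is stated for the SPRC rather than the PRC. A secondary technical point is ensuring the constructed $X(u)$ does not merely satisfy $\limsup|X-X_0|\le\delta$ for each fixed $\delta$ along possibly different curves, but that a single curve has $\lim_{u\to0}X(u)=X_0$; this is handled by a nested-interval / diagonal argument over a sequence $\delta_n\to 0$, or more cleanly by noting the invariant region for the smallest $\delta$ contains a full trajectory, and uniqueness of the ODE solution (from the $\mathcal{C}^1$ bound in step (i)) pins it down. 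Everything else — the limits (\ref{53})--(\ref{55}), the computation of $A$ and $B$, the polarity checks — is routine given the machinery already assembled in the excerpt.
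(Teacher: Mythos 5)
There is a genuine gap, and it sits exactly at your step (iii), which carries the whole weight of the argument. Writing the equation as $u\,dX/du=-V(X)+\left[\mathcal{U}(X,u)-\mathcal{U}(X,0)\right]$, the autonomous part near the simple root is $u\,dX/du\approx -V'(X_0)(X-X_0)$, whose solutions behave like $X-X_0\propto u^{-V'(X_0)}$. Your box $\{(u,X):0<u\le u_*,\ |X-X_0|\le\delta\}$ is positively invariant for the flow toward $u=0$ only for one sign of $V'(X_0)$, namely the sign for which the leading term pushes $X$ back toward $X_0$ as $u$ decreases; for the opposite sign the vector field on $|X-X_0|=\delta$ points out of the box as $u$ decreases, every trajectory except a single distinguished one leaves any neighbourhood of $X_0$ before reaching $u=0$, and the contraction/invariant-region scheme proves nothing. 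The SPRC only guarantees that the root is simple, $V'(X_0)\neq 0$; it does not fix its sign, and neither of your hedges repairs this: you cannot ``pick the branch on which the derivative is negative'' (there may be exactly one simple positive root, with whatever sign it has), and a reparametrization $u\mapsto 1/u$ cannot change the stability of $X_0$ with respect to the limit $u\to 0^+$, which is where the central singularity lies. In the non-attracting case one needs a different device — a shooting argument, or a fixed-point formulation of the integral equation selecting the distinguished solution — so as written your proof establishes the theorem only for a subclass of the data satisfying the SPRC.

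The paper's proof is, in effect, the repaired version of your scheme and covers both signs uniformly: it isolates the linear part of $V$ about $X_0$ through the decomposition (\ref{57})--(\ref{60}) (with $h_0-1=V'(X_0)$ and $h$ purely nonlinear), rewrites the RNG equation as (\ref{62}), and integrates it with an integrating factor in $u$ to obtain the explicit one-parameter representation (\ref{63}), $X-X_0=E\,u^{h_0-1}+u^{h_0-1}\int S\,u^{-h_0}\,du$ with $S\to 0$ at the root. From this formula at least one solution with $X\to X_0$ as $u\to 0$ is read off in every case: the distinguished solution $E=0$ when the homogeneous mode fails to decay, and an infinite family (arbitrary $E$) when it does decay — the latter being the non-zero-measure family of ORNGs, which is also what your invariant-region argument would recover in the sign case where it applies. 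Your remaining ingredients — uniform convergence $\mathcal{U}(X,u)\to\mathcal{U}(X,0)$ near $X_0$, the lower bound on $|V|$ linear in $|X-X_0|$ coming from simplicity, and the translation back through (\ref{43})--(\ref{45}) and (\ref{49}) to a past-incomplete ORNG reached at finite affine parameter — are sound and consistent with what the paper uses, but the unstable-root case needs the integral-representation (or an equivalent shooting) argument before the theorem is fully proved.
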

\begin{proof}
   \noindent Assume there exists a simple positive real root $X_0 \in \mathbb{R}^+$ of the function $V$ defined in theorem \ref{t2}. Then we may write \cite{PhysRevD.47.5357},
   \begin{equation}\label{57}
    V(X)=(X-X_0)(h_0 -1) + h(X).
\end{equation}
\noindent Here, $h_0 \coloneqq h(X_0)$ and $h(X)$ is to be chosen such that,
\begin{equation}\label{58}
    \frac{dh}{dX}\bigg|_{X=X_0} = 0.
\end{equation}
\noindent In other words, $h(X)$ must be chosen such that it contains only nonlinear terms in $(X-X_0)$. Differentiating equation (\ref{57}) yields,
\begin{equation}\label{59}
    h_0 = \frac{dV}{dX}\bigg|_{X=X_0} +1.
\end{equation}
\noindent Differentiating equation (\ref{50}) using equation (\ref{51}) and evaluating at $X=X_0$. Then substituting the result in equation (\ref{59}), we get the following expression for $h_0$:
\begin{equation}\label{60}
    h_0 = \frac{A (N-3) (2 B X_0^2 + X_0^{\frac{N + 1}{2}}) +B (N-5) X_0^{\frac{N + 3}{2}} + 2 X_0^N (\alpha -1)}{2 \alpha X_0^N}.
\end{equation}
\noindent Now consider the following equation in the ($X,u$) plane:
\begin{equation}\label{61}
    \frac{dX}{du} = \frac{1}{u}\left(\frac{dR}{du} - X\right) = \frac{\mathcal{U}(X,u)-X}{u}.
\end{equation}
\noindent The solutions of this equation (if they exist) are trajectories of RNGs in the $(X,u)$ plane of the form $u=u(X)$. If these trajectories terminate (in the past) at a singularity with tangents $X=X_0$, then as $u\to0$, we have $X\to  X_0$. Substituting equations (\ref{50}) and (\ref{60}) in equation (\ref{61}), and defining $S(X,u) \coloneqq h(X)+\mathcal{U}(X,u) - \mathcal{U}(X,0)$, we find that,
\begin{equation}\label{62}
    \frac{dX}{du} = \frac{(X-X_0)(h_0 -1)}{u}+\frac{S(X,u)}{u}.
\end{equation}
\noindent Multiplying this equation by $u^{h_0 -1}$ and integrating with respect to $u$ yields,
\begin{equation}\label{63}
X-X_0 = E u^{h_0 -1} + u^{h_0 -1} \int S(X,u) u^{-h_0}\ du.
\end{equation}
\noindent These represent a family of solutions to equation (\ref{61}). Since $X_0 \in \mathbb{R}^{+}$, we find that in the limit of approach to the central singularity, $dR/du >0$, and hence equation (\ref{63}) represents trajectories of past-incomplete ORNGs in the ($X,u$) plane. Note that $E$ is an integration constant that labels different ORNGs in the $(X,u)$ plane. In the limit of approach to the central singularity, i.e., in the limit $(X,u) \to (X_0 ,0)$, the second term in equation (\ref{63}) vanishes as $S\to 0$ by construction. However, the first term vanishes if either (i) $E=0$ and $h_0 \leq 1$, or (ii) $ E = 0$ and $h_0 >1$, or (iii) $E\neq 0$ and $h_0 >1$. In cases (i) and (ii), we find that a single ORNG terminates in the past with tangent $X=X_0$, and in case (iii), we find that an infinite family of ORNGs terminate at the singularity in the past with tangents $X=X_0$, each ORNG being labelled by a distinct value of the integration constant $E$. Using equation (\ref{60}), we find that if $V(X)$ is an increasing function of $X$ at $X=X_0$, then a non-zero measure set of ORNGs emanate from the singularity. In any case, if $X_0 \in \mathbb{R}^{+}$ satisfies equation (\ref{50}), then there exists at least one past-incomplete ORNG in $(M, \mathbf{g})$ that terminates in the past at the central singularity with tangent $X=X_0$.\\
\end{proof}

\noindent Theorems \ref{t2} and \ref{t3} provide necessary and sufficient conditions (respectively) for the central singularity to be naked. Moreover, the restriction on $\alpha$ given by equation (\ref{52}) also represents a necessary condition for the central singularity to be naked, since it is required for the well-definition of $V$. Note that substituting $k=1$ in equation (\ref{52}) yields the choice $\alpha = (N+1)/(N-1)$, which is the only value considered in \cite{PhysRevD.76.084026}. Moreover, substituting $N=4$ in equation (\ref{52}) recovers the results of \cite{PhysRevD.101.044052}, however with the correction that $k\leq 3$. Hence, we have generalized the results of \cite{PhysRevD.76.084026} for any value of the $\alpha$ parameter. 

\section{Gravitational Strength of the Naked Singularity}\label{sec4}

\noindent We have now established that there exist classes of solutions to the Einstein field equations subject to the DEC and NEC, that admit a naked singularity. It is often argued that singularities must only be taken seriously if they are of the so-called strong curvature type. The idea of a strong curvature singularity goes back to 1977, when Ellis and Schmidt \cite{Ellis:1977pj} first defined the gravitational strength of a singularity by referring to the destruction or crushing of objects approaching the singularity. In the same year, Tipler \cite{Tipler:1977zza} provided a mathematically rigorous formulation of this idea, by utilising the notion of the volume form defined by the Jacobi fields along causal geodesics approaching the singularity.\\

\color{black}
\begin{definition}
    (Non-trivial Jacobi fields)\\
    Let $(M,\mathbf{g})$ be an $N$ dimensional Lorentzian manifold and $\gamma: I \to M$ be a differentiable causal geodesic with tangents $\dot{\gamma} \in \Gamma(\mathcal{T}M)$. For each parameter value $\lambda \in I$, we define the subspace $H_{\gamma(\lambda) }M \subset\mathcal{T}_{\gamma(\lambda)} M$ as
    \begin{equation}
        H_{\gamma(\lambda) }M \coloneqq \{X \in \mathcal{T}_{\gamma(\lambda)} M\ |\ g_{\gamma(\lambda)}(X,\dot{\gamma}(\lambda)) = 0 \}.
    \end{equation}
    \noindent For null geodesics, we further define the quotient set $S_{\gamma(\lambda)}M \coloneqq H_{\gamma(\lambda)}M / \sim$ with respect to the equivalence relation for any $X,Y \in H_{\gamma(\lambda)}M$ given by $X \sim Y :\iff X-Y = c~\dot{\gamma} (\lambda)$, for some $c \in \mathbb{R}$. Then, in the case where $\gamma$ is timelike, the set of nontrivial Jacobi fields along $\gamma$ is defined as the $(N-1)$ dimensional subspace,
    \begin{equation}
        J^t_{\gamma(\lambda)}M \coloneqq \{J(\lambda) \in H_{\gamma(\lambda)} M\ |\ J\ \text{is a Jacobi field along}\ \gamma \}.
    \end{equation}
    \noindent Whereas in the case where $\gamma$ is null, this set is the $(N-2)$ dimensional subspace,
    \begin{equation}
        J^n_{\gamma(\lambda)} M \coloneqq \{J(\lambda) \in S_{\gamma(\lambda)} M\ |\ J\ \text{is a Jacobi field along}\ \gamma \}
    \end{equation}
\end{definition}

\begin{definition}
    (Volume form and volume along causal geodesics)\\
    The volume form for a timelike geodesic $\gamma: I \to M$, is defined by the wedge product of independent non-trivial Jacobi fields along $\gamma$ as,
\begin{equation}
\begin{aligned}
    \mu :  \bigtimes^{N-1} J^t_{\gamma(\lambda)}M &\to \bigwedge^{N-1} J^t_{\gamma(\lambda)} M\\
           (J_1 (\lambda) ,\cdots, J_{N-1} (\lambda)) &\mapsto J_1 (\lambda) \wedge \cdots \wedge J_{N-1} (\lambda)
\end{aligned}
\end{equation}
\noindent It is defined likewise for a null geodesic by replacing $N-1$ by $N-2$ and $J^t_{\gamma(\lambda)}M$ by $J^n_{\gamma(\lambda)}M$. In a basis $(e_1, \cdots e_{N-1})$ of $J^t_{\gamma(\lambda)} M$ (resp. in a basis $(e_1, \cdots, e_{N-m})$ of $J^n_{\gamma(\lambda)} M$), the non-trivial Jacobi fields will have components $J_1^a (\lambda),\cdots, J_{N-1}^a (\lambda)$ (resp. $J_1^a (\lambda),\cdots,J_{N-2}^a (\lambda)$). If $(\epsilon^1, \cdots, \epsilon^{N-1})$ (resp. $\epsilon^1,\cdots,\epsilon^{N-2}$) is the corresponding basis for $(J^t_{\gamma(\lambda)}M)^*$ (resp. $(J^n_{\gamma(\lambda)}M)^*$), then the volume corresponding to the volume element is the scalar given by,
\begin{equation}
    V(\lambda) \coloneqq \mu(\epsilon^1,\cdots,\epsilon^{N-m}) \equiv \det\ [J^a_b(\lambda)]
\end{equation}

\noindent where $m=1$ for timelike and $m=2$ for null geodesics, and $[J^a_b (\lambda)]$ is the matrix whose columns are defined by the components of the $(N-m)$ non-trivial Jacobi fields in the given basis, i.e., $a$ and $b$ run from $1$ to $N-m$. Since the determinant is basis independent, the volume thus defined is also basis independent. \\
\end{definition} 

\begin{definition}\label{d16}
    (Strong curvature singularity, Tipler \cite{Tipler:1977zza})\\
    Let $(M,\mathbf{g})$ be singular, and let $\gamma: [\lambda_0, 0) \to M$ be an incomplete and inextendible causal geodesic, such that its affine parameter $\lambda \to 0$ in the limit of approach to the singularity. Then the curve $\gamma$ is said to terminate strongly if,
\begin{equation}\label{65}
    \lim_{\lambda \to 0} V(\lambda) = 0.
\end{equation}
\noindent The singularity itself is called a strong curvature singularity if all causal geodesics of $M$ terminate strongly.\\
\end{definition}

\noindent In 1985, Clarke and Krolak \cite{1985JGP.....2..127C} gave a sufficient condition for a curve to terminate strongly, in terms of the rate of growth of the Ricci curvature in the limit of approach to the singularity.\\

\begin{definition}\label{d}
    (Strong limiting focussing condition, Clarke and Krolak \cite{1985JGP.....2..127C})\\
    Let $(M,\mathbf{g})$ be singular and satisfy the causal convergence condition, i.e., if for all causal curves $\gamma: \mathbb{R} \supseteq I \to M$ with parameter $\lambda\in I$ and tangents $\dot{\gamma}$,  we have,
    \begin{equation}
        \mathbf{Ric}(\dot{\gamma},\dot{\gamma}) \geq 0,
    \end{equation}

    \noindent then $\gamma$ is said to satisfy the strong limiting focussing condition (SLFC) if the following integral is non-integrable on $I$:
    \begin{equation}
        \int_{I} \mathbf{Ric}(\dot{\gamma},\dot{\gamma})\ d\lambda.
    \end{equation}
\end{definition}

\begin{proposition}\label{p1}
    (Curvature growth condition, Clarke and Krolak \cite{1985JGP.....2..127C})\\
    A causal geodesic $\gamma: [\lambda_0,0) \to M$ with tangents $\dot{\gamma}$ and $\lambda \to 0$ at the singularity, terminates strongly if the SLFC holds along $\gamma$. A sufficient condition for this is as follows:
\begin{equation}\label{66}
\lim_{\lambda\to 0} \lambda^2\ \mathbf{Ric}(\dot{\gamma},\dot{\gamma}) > 0.
\end{equation}
\noindent We refer to this as the curvature growth condition (CGC), since it implies a strong rate of growth of the Ricci curvature (in particular, a quadratic blow-up) along $\gamma$ as it approaches the singularity.\\
\end{proposition}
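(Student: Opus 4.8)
The statement packages two implications, and the plan is to treat them separately: (a) the strong limiting focusing condition (SLFC) along $\gamma$ already forces $\lim_{\lambda\to0}V(\lambda)=0$, i.e.\ strong termination in the sense of Definition~\ref{d16}; and (b) the quadratic blow-up $\lim_{\lambda\to0}\lambda^{2}\,\mathbf{Ric}(\dot\gamma,\dot\gamma)>0$ is sufficient for the SLFC. Implication (a) is the substantive content of Clarke and Krolak's theorem and I would adopt their argument for it; implication (b) is the short estimate where the explicit work sits, and I would carry it out first. Composing (b) with (a) then yields that the curvature growth condition implies strong termination along $\gamma$.

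For (b), set $c\coloneqq\lim_{\lambda\to0}\lambda^{2}\,\mathbf{Ric}(\dot\gamma,\dot\gamma)>0$. By definition of the limit there is $\lambda_{1}\in[\lambda_{0},0)$ with $\mathbf{Ric}(\dot\gamma,\dot\gamma)(\lambda)\ge\tfrac{c}{2\lambda^{2}}\ge0$ for all $\lambda\in[\lambda_{1},0)$; in particular the causal convergence condition holds near the singularity, so the SLFC is meaningful there, and
\[
\int_{\lambda_{1}}^{0}\mathbf{Ric}(\dot\gamma,\dot\gamma)\,d\lambda\;\ge\;\frac{c}{2}\int_{\lambda_{1}}^{0}\frac{d\lambda}{\lambda^{2}}\;=\;\frac{c}{2}\,\Bigl[-\frac{1}{\lambda}\Bigr]_{\lambda_{1}}^{0}\;=\;+\infty ,
\]
since $-1/\lambda\to+\infty$ as $\lambda\to0^{-}$. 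Adjoining the finite contribution of the compact interval $[\lambda_{0},\lambda_{1}]$ shows $\mathbf{Ric}(\dot\gamma,\dot\gamma)$ is non-integrable on $I$, which is the SLFC.

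For (a), I would pass to the matrix Jacobi equation along $\gamma$. In a parallelly propagated orthonormal frame of the $n$-dimensional space of non-trivial Jacobi fields ($n=N-1$ for timelike $\gamma$, $n=N-2$ for null $\gamma$, working in the quotient $S_{\gamma(\lambda)}M$), let $A(\lambda)$ be the matrix of frame components of a basis of such fields, so that $V(\lambda)=\det A(\lambda)$ and $\ddot A+\mathcal R A=0$ with $\mathcal R$ the symmetric tidal operator, $\mathrm{tr}\,\mathcal R=\mathbf{Ric}(\dot\gamma,\dot\gamma)$. On an interval where $V>0$, put $B\coloneqq\dot A A^{-1}$ and $\theta\coloneqq\mathrm{tr}\,B=\tfrac{d}{d\lambda}\ln V$; since the paper has restricted to hypersurface-orthogonal (vorticity-free) congruences, $B$ is symmetric, and the trace of the matrix Riccati equation $\dot B+B^{2}+\mathcal R=0$ gives the Raychaudhuri identity, whence — discarding the nonnegative shear term and using $\mathbf{Ric}(\dot\gamma,\dot\gamma)\ge0$ — the scalar Sturm-type inequality
\[
\ddot f\;\le\;-\frac{1}{n}\,\mathbf{Ric}(\dot\gamma,\dot\gamma)\,f\;\le\;0 ,\qquad f\coloneqq V^{1/n}>0 ,
\]
so $f$ is concave and, integrating once, $\dot f(\lambda)\to-\infty$ as $\lambda\to0$ by non-integrability of $\mathbf{Ric}(\dot\gamma,\dot\gamma)$. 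From here the conclusion $f\to0$, hence $V\to0$, follows from Clarke and Krolak's analysis, which I would import: it controls the interplay between the focusing hypothesis, the concavity of $f$, and the possible occurrence of a conjugate point $\lambda_{c}\in(\lambda_{0},0)$ (handled by restarting the estimate at $\lambda_{c}$ with Jacobi fields vanishing there), using that strong termination is basis-independent.

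The main obstacle is exactly this last step. The reduction to $\ddot f+\tfrac1n\mathbf{Ric}(\dot\gamma,\dot\gamma)f\le0$ is routine once the vorticity-free assumption is granted, and (b) is a one-line divergence estimate; but the final deduction is delicate, because concavity of $f$ together with $\dot f\to-\infty$ does \emph{not} by itself rule out a strictly positive limit of $f$, so one must exploit the precise form of the SLFC and handle conjugate points with care — this is the technical heart of Clarke and Krolak's work, and I would cite it rather than reprove it.
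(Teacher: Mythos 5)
Your proposal is correct, but it takes a genuinely different route from the paper. You factor the statement as CGC $\Rightarrow$ SLFC $\Rightarrow$ strong termination: the first implication is your short estimate $\mathbf{Ric}(\dot{\gamma},\dot{\gamma}) \ge c/(2\lambda^2)$ near $\lambda = 0$, whence $\int \mathbf{Ric}(\dot{\gamma},\dot{\gamma})\, d\lambda$ diverges, and this part is fully rigorous; the second implication you import from Clarke and Krolak, and you are right that it is the delicate one (your counterexample-style remark that concavity of $f = V^{1/n}$ together with $\dot{f} \to -\infty$ does not by itself force $f \to 0$ is accurate). The paper never routes through the SLFC at all: its proof uses the full quantitative strength of the CGC inside the Raychaudhuri equation, dropping the non-negative shear and $\Theta^2$ terms to get $d\Theta/d\lambda \le -\mathbf{Ric}(\dot{\gamma},\dot{\gamma}) \le -A/\lambda^2$ (the displayed intermediate inequality in the paper has a sign slip, but the subsequent integration is consistent with the correct bound), then integrates twice using $\Theta = V^{-1}\, dV/d\lambda$ to obtain the explicit power-law decay $V(\lambda) \le B\,|\lambda|^{A} \to 0$. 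Thus the paper gives a direct, essentially self-contained proof of the CGC sufficiency only (tacitly assuming $V > 0$ up to the singularity, i.e.\ no intervening conjugate point, and vorticity-free congruences), leaving the SLFC implication as the cited Clarke--Krolak result; your version makes the chain CGC $\Rightarrow$ SLFC explicit, which is closer to how the proposition is phrased, but places the analytic burden on the cited theorem, whereas the paper's $1/\lambda^2$ rate lets it bypass that delicate analysis entirely. Since the proposition is in any case attributed to Clarke and Krolak, both treatments are acceptable.
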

\begin{proof}
    The result can be obtained by a procedure somewhat similar to the well-known focusing theorem \cite{Wald:1984rg}. Consider the Raychaudhuri equation for a hypersurface-orthogonal congruence of causal geodesics,
    \begin{equation}
        \frac{d \Theta}{d\lambda} = - \mathbf{Ric}(\dot{\gamma},\dot{\gamma}) - \sigma^2 - \frac{1}{N-m} \Theta^2
    \end{equation}
    \noindent Here, $\Theta$ is the expansion and $\sigma$ is the shear of the congruence. Further, $m=1$ for timelike and $m=2$ for null congruences. Assume that the CGC holds along all curves in the congruence. Then we can ignore the shear and expansion terms on the RHS, and we get the inequality,
    \begin{equation}
        \frac{d\Theta}{d\lambda} \leq \frac{A}{\lambda^2}
    \end{equation}
\noindent for some $A>0$. Integrating over [$\Theta_0, \Theta$] yields,
\begin{equation}
    \Theta \leq C + \frac{A}{\lambda}
\end{equation}
 \noindent where $C$ is some constant due to integration. The expansion $\Theta$ measures the divergence or convergence of geodesics within a congruence, and as such is related to the volume spanned by Jacobi fields along these geodesics as follows \cite{hawking2023large}:
 \begin{equation}
     \Theta(\lambda) = \frac{1}{V(\lambda)}\frac{d}{d\lambda} V(\lambda)
 \end{equation}
\noindent In other words, the expansion measures the fractional rate of change of the volume enclosed by a geodesic congruence. Substituting this relation in the inequality, and integrating over [$V_0,V$] yields,
\begin{equation}
    V(\lambda) \leq B \lambda^A
\end{equation}
\noindent where $B$ is an integration constant. Clearly in the limit $\lambda \to 0$, we have $V(\lambda) \to 0$, since $A>0$. Hence the required result follows. 
\end{proof}
\color{black}

\noindent It follows that if the CGC is satisfied along all causal geodesics in $M$, then the singularity is by definition strong-curvature type. Moreover, for the purposes of this paper, the assumed spherical symmetry leads to the following useful result, the clear proof of which, the authors could not find in existing literature.\\
\begin{proposition}\label{p2}
    A spherically symmetric spacetime $(M,\mathbf{g})$ satisfying the DEC and NEC, with $\mathbf{g}$ given by equation $($\ref{1}$)$, contains a strong curvature singularity if the CGC holds along at least one ORNG $\gamma:[\lambda_0, 0) \to M$.
\end{proposition}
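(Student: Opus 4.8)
The plan is to reduce the claim to Definition~\ref{d16} and Proposition~\ref{p1}: by Definition~\ref{d16} one must show that every incomplete, inextendible causal geodesic of $M$ terminates strongly, and by Proposition~\ref{p1} it suffices, granted the causal convergence condition, to verify the curvature growth condition (\ref{66}) along each such geodesic. So the argument has three ingredients: (i) extract the causal convergence condition from the Einstein equations (\ref{10a})--(\ref{10d}), the type-I form (\ref{2}) of $\mathbf{T}$, and the DEC/NEC (\ref{8}); (ii) use the hypothesised CGC along the single ORNG $\gamma$ to obtain a lower bound on the rate at which the energy density blows up at the singularity; (iii) propagate that bound, via spherical symmetry and the DEC, to $\mathbf{Ric}(\dot\sigma,\dot\sigma)$ along an arbitrary causal geodesic $\sigma$ reaching the singularity.

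First, for a causal geodesic $\sigma$ with tangent $\dot\sigma = K^t\partial_t + K^r\partial_r + K^\theta\partial_{\theta}$ (spherical symmetry permits assuming $\sigma$ planar), the trace-reversed Einstein equation (with gravitational coupling $\kappa_N>0$) together with (\ref{2}) gives
\[
\mathbf{Ric}(\dot\sigma,\dot\sigma) = \kappa_N\bigl[(\rho+p_r)e^{2\phi}(K^t)^2 + (p_\theta-p_r)R^2(K^\theta)^2\bigr] - \frac{\kappa_N}{N-2}\,(\mathrm{tr}\,\mathbf{T})\,\mathbf{g}(\dot\sigma,\dot\sigma),
\]
the last term vanishing for null $\sigma$. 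For a radial null geodesic this is $\kappa_N(\rho+p_r)e^{2\phi}(K^t)^2\ge0$ by the NEC; for a non-radial null geodesic the null identity $R^2(K^\theta)^2\le e^{2\phi}(K^t)^2$ together with $\rho+p_r\ge0$ and $\rho+p_\theta\ge0$ still gives $\mathbf{Ric}(\dot\sigma,\dot\sigma)\ge0$, so the convergence condition holds at least for every null geodesic. Specialising the radial-null formula to the ORNG $\gamma$ of the hypothesis, the CGC (\ref{66}) reads $\lim_{\lambda\to0}\lambda^2(\rho+p_r)e^{2\phi}(K^t_\gamma)^2>0$, and since $0\le\rho+p_r\le2\rho$ by (\ref{8}), this forces $\rho$ to blow up at least quadratically in the affine parameter along $\gamma$, consistently with (\ref{34}).

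Next, given an arbitrary incomplete causal geodesic $\sigma$ terminating at the (central) singularity, spherical symmetry, the finiteness of the time curve (\ref{31}), and the form (\ref{24}) of $F$ should allow the affine parameter of $\sigma$ to be chosen so that the scaling variable $v\to0$ along $\sigma$ at a rate comparable, up to a positive constant fixed by the limiting tangent, to its rate along $\gamma$; this transports the quadratic blow-up of $\rho$ from $\gamma$ to $\sigma$. Inserting this into the displayed formula and bounding the angular and trace terms using $|p_r|,|p_\theta|\le\rho$ (DEC) and the causal-normalisation estimate on $(K^t,K^r,K^\theta)$ gives $\mathbf{Ric}(\dot\sigma,\dot\sigma)\ge c\,\rho\,e^{2\phi}(K^t)^2 - (\text{lower order})$ for some $c>0$, hence $\lim_{\lambda\to0}\lambda^2\mathbf{Ric}(\dot\sigma,\dot\sigma)>0$; Proposition~\ref{p1} then forces $\sigma$ to terminate strongly, and Definition~\ref{d16} yields the claim.

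The hard part is this last step. The sign of $p_\theta-p_r$ is not controlled by the DEC or NEC, so the angular term $(p_\theta-p_r)R^2(K^\theta)^2$ may be negative and, a priori, comparable to the leading one; keeping the net coefficient of $\lambda^{-2}$ strictly positive requires using the null (or timelike) normalisation quantitatively, through the leading asymptotics of $K^t,K^\theta$ near the singularity, rather than merely the inequality $R^2(K^\theta)^2\le e^{2\phi}(K^t)^2$. More delicate still, the full \emph{timelike} convergence condition need not hold under DEC and NEC alone --- in a degenerate regime ($p_r,p_\theta$ close to $-\rho$) the trace term can render $\mathbf{Ric}(\dot\sigma,\dot\sigma)$ large and negative --- so for timelike $\sigma$ Proposition~\ref{p1} cannot be applied verbatim, and one must either argue directly that the Jacobi volume still collapses (e.g.\ via a shear blow-up, tied to the Weyl curvature, dominating the Raychaudhuri equation), restrict the matter so as to exclude such asymptotics, or settle for the strong-curvature conclusion along null geodesics (which already secures the $C^2$-inextendibility). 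Establishing the comparability of the affine parametrisations of $\gamma$ and $\sigma$ near the singularity, expected from (\ref{31}) and (\ref{24}), is the technical core of making the transfer rigorous.
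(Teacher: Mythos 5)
Your proposal does not close the argument, and the two steps you yourself flag as the ``hard parts'' are exactly where it stalls --- and exactly the steps the paper's proof is structured to avoid. The paper never transports a blow-up rate of $\rho$ from the ORNG $\gamma$ to an arbitrary causal geodesic $\sigma$, so no comparability of affine parametrisations is needed. Instead it substitutes the causal-normalisation identity (\ref{67a}), $e^{2\phi}(K^t)^2 = e^{2\psi}(K^r)^2 + \ell^2/R^2 - \mathcal{B}$, into $\mathbf{Ric}(\dot\sigma,\dot\sigma)=\mathbf{T}(\dot\sigma,\dot\sigma)-\tfrac{\mathcal{B}}{N-2}\,\mathsf{tr}_{\mathbf g}\mathbf{T}$ \emph{before} taking limits. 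This produces the decomposition (\ref{70}), in which every term carries a coefficient $\rho+p_r$ or $\rho+p_\theta$, non-negative by (\ref{8}); the uncontrolled combination $p_\theta-p_r$ that your decomposition (keeping $(K^t)^2$ and $(K^\theta)^2$ as independent variables) produces simply never appears, so your worry about the sign of the angular term is an artefact of your choice of variables. The remaining point you treat as the technical core is handled by L'Hospital's rule: since $r\to0$ as $\lambda\to0$, the first term of (\ref{70}), $\lim_{\lambda\to0}\lambda^2(\rho+p_r)e^{2\psi}(dr/d\lambda)^2$, equals the \emph{spacetime} limit $\lim_{(t,r)\to(t_{s_0},0)}(\rho+p_r)\,r^2e^{2\psi}$ of (\ref{71}), which does not depend on which causal geodesic is being followed. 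For an ORNG this term is the entire CGC expression; for every other causal geodesic it reappears as one summand, the others being non-negative. Hence CGC along one ORNG gives CGC along all causal geodesics, and Proposition \ref{p1} together with Definition \ref{d16} concludes --- no transfer lemma, no separate machinery for timelike geodesics.

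Concretely, the gaps are: (a) the ``comparability of affine parametrisations'' of $\gamma$ and $\sigma$ is asserted as plausible but never established, and as formulated it would be delicate (distinct causal geodesics terminating at the singularity need not have comparable affine parameters); the correct move is to eliminate $\lambda$ in favour of $r$ via L'Hospital, as the paper does; (b) your bound $\mathbf{Ric}(\dot\sigma,\dot\sigma)\ge c\,\rho\,e^{2\phi}(K^t)^2-(\text{lower order})$ is not derived, and with your decomposition it cannot be, because $(p_\theta-p_r)R^2(K^\theta)^2$ is of the same order and of uncontrolled sign; (c) the fallbacks you offer for timelike geodesics (shear/Weyl domination of the Raychaudhuri equation, restricting the matter, or settling for the null case) are sketches of alternative statements, not a proof of the proposition as stated. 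Your caution that the timelike convergence condition is subtle under DEC/NEC alone is a reasonable instinct in general, but the paper's route does not invoke it abstractly: it relies on the specific form of (\ref{70}), in which the extra terms present for timelike geodesics are again of the form $\rho+p_r$ or $\rho+p_\theta$ and hence non-negative under the assumed energy conditions, so Proposition \ref{p1} applies uniformly to all causal geodesics.
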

\begin{proof}
    We work in the comoving coordinates as defined earlier. In these coordinates, the components of the tangents to outgoing causal geodesics are related as follows \cite{PhysRevD.101.044052}:
\begin{subequations}
    \begin{align}
        \left(\frac{dt}{d\lambda}\right)^2 &= \frac{1}{e^{2\phi(t,r)}} \left[\left(\frac{dr}{d\lambda}\right)^2 e^{2\psi(t,r)} +\frac{\ell^2}{R^2 (t,r)} - \mathcal{B}(\lambda)\right]; \label{67a}  \\
        \left(\frac{d \theta^1}{d\lambda}\right)^2 &+ \sum_{i=2}^{N-2} \left(\prod_{j=1}^{i-1} \sin^2 \theta^j\right) \left(\frac{d\theta^i}{d\lambda}\right)^2 = \frac{\ell^2}{R^4 (t,r)}. \label{67b}
    \end{align}
\end{subequations}
\noindent Here, $\ell$ is a conserved quantity called the \textit{impact parameter} \cite{PhysRevD.101.044052}, which vanishes for any radial curve. Moreover, we have defined,
\begin{equation}\label{68}
 \forall\ \lambda \in [\lambda_0,0): ~\mathcal{B}(\lambda) \coloneqq \mathbf{g}_{\gamma(\lambda)}(\dot{\gamma}(\lambda),\dot{\gamma}(\lambda)).
\end{equation}
\noindent For outgoing causal geodesics, we have,
\begin{equation}\label{69}
  \mathbf{Ric} (\dot{\gamma}, \dot{\gamma}) = \mathbf{T}(\dot{\gamma},\dot{\gamma}) - \frac{\mathcal{B}(\lambda)}{N-2} \mathsf{tr}_\mathbf{g} \mathbf{T}.
\end{equation}
\noindent Here, $\mathbf{T}$ is the matter field over $M$. Using equations (\ref{2}), (\ref{67a}) and (\ref{67b}), and substituting $B(\lambda) = -1$, we find that for any non-radial timelike geodesic,
\begin{equation}
\begin{aligned}\label{70}
  \lim_{\lambda \to 0} \lambda^2\ \mathbf{Ric} (\dot{\gamma},\dot{\gamma}) =& \lim_{\lambda\to 0} \lambda^2 (\rho + p_r) e^{2\psi} \left(\frac{dr}{d\lambda}\right)^2\\
     &+ \lim_{\lambda \to 0} \lambda^2 (\rho + p_\theta ) \frac{\ell^2}{R^2 } \\
     &+\lim_{\lambda \to 0} \frac{\lambda^2}{N-2} (\rho+p_r)\\
     &+ \lim_{\lambda \to 0} \lambda^2 (\rho+p_\theta).
     \end{aligned}
\end{equation}
\noindent For radial timelike geodesics, the second term in equation (\ref{70}) vanishes (taking $\ell=0$), and for non-radial null geodesics, the last two terms vanish (taking $B(\lambda)=0$). Note that the first term on the RHS turns out to be common among radial null geodesics and all other kinds of causal geodesics. Moreover, from equation (\ref{8}), it is clear that the NEC \cite{Maeda:2018hqu} implies that the all the terms on the RHS of equation (\ref{70}) are non-negative. For the first term on the RHS of equation (\ref{68}), since $r\to 0 $ as $\lambda \to 0$, using L'Hospital's rule, we find that,
\begin{equation}\label{71}
    \lim_{\lambda \to 0}  \lambda^2 (\rho+ p_r) e^{2\psi} \left(\frac{dr}{d\lambda}\right)^2 = \lim_{\lambda\to 0 } (\rho + p_r) r^2 e^{2\psi}.
\end{equation}
\noindent This limit is the only term featuring in the expression for the CGC for ORNGs. Since all the other terms in equation (\ref{70}) are non-negative, we have the useful result that if the CGC holds along an ORNG, then it holds along all causal curves, and hence, using proposition \ref{p1} and definitions \ref{d16} and \ref{d}, we find that the singularity is of strong-curvature type.\\
\end{proof}

\noindent As before, it is useful to characterize the properties of the singularity in terms of the parameter $\alpha$. The positivity (or otherwise) of the limit in equation (\ref{71}) depends on what values $\alpha$ takes. In other words, the requirement that the central naked singularity be of strong-curvature type leads to a more restriction on $\alpha$ that is already restricted as in (\ref{52}). \textcolor{black}{Using the Einstein equations (\ref{10a}) and (\ref{10b}), and equations (\ref{12}), (\ref{43}) and (\ref{71}), the CGC for any ORNG $\gamma: [\lambda_0,0)\to M$ may be rewritten by evaluating the limit in the $(v,r)$ plane as},
\begin{equation}\label{72}
    \lim_{\lambda \to 0} \lambda^2\ \mathbf{Ric} (\dot{\gamma},\dot{\gamma}) = \frac{N-2}{2}\lim_{(t,r) \to (t_{s_0},0)} \left(\frac{F'}{R'}-\frac{\dot{F}}{\dot{R}}\right) \frac{(R')^2}{R^{N-2}}.
\end{equation}
\noindent Using equation (\ref{41}) and (\ref{46}), the limit of $R'$ can be written as,
\begin{equation}\label{73}
    \lim_{\lambda \to 0} R' = \alpha X_0 \lim_{r\to 0} r^{\alpha -1}.
\end{equation}
\noindent Using this result, in addition to equation (\ref{24}) and the requirement that $\tilde{\mathcal{M}}$ must be least $\mathcal{C}^1$ in the limit of approach to the singular epoch, the CGC for ORNGs is satisfied if
\begin{equation}\label{74}
     \frac{\alpha (N-2) \tilde{\mathcal{M}}(0,0)}{2 X_0 ^{N-3}} \left(\lim_{r\to 0}  r^{(3-N)\alpha + N} \right) + \frac{\alpha(N-2)\tilde{\mathcal{M}}(0,0)}{2X_0^{N-3}} \left(\lim_{r\to 0} r^{(3-N)\alpha + (N-1)}\right) > 0. 
\end{equation}
\noindent This inequality restricts the values that $\alpha$ can take in order for the central singularity to be of the strong curvature type. Indeed, from equation (\ref{74}) we find that for any ORNG $\gamma:[\lambda_0,0)\to M$, we have,
\begin{equation}\label{75}
   \alpha \in \left[\frac{N-1}{N-3}, \infty \right) \iff \lim_{\lambda \to 0} \lambda^2\ \mathbf{Ric} (\dot{\gamma},\dot{\gamma}) >0. 
\end{equation}
\noindent In other words, the limit in (\ref{74}) diverges, and hence the ORNG satisfies the CGC. Hence, by propositions \ref{p1} and \ref{p2}, all causal geodesics terminate strongly, and hence the singularity is of the strong curvature type. \\
\begin{theorem}\label{t4}
Central naked singularities for $N\in\{4,5\}$ are of strong-curvature type if
\begin{equation}\label{76}
 \alpha = \frac{N-1}{N-3}.
\end{equation}
\end{theorem}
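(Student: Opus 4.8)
The plan is to assemble three ingredients that are already in place: the admissibility constraint (\ref{52}) on the parameter $\alpha$ (needed for $V$ to be well defined, hence for the necessary and sufficient conditions of Theorems \ref{t2} and \ref{t3} to apply), the equivalence (\ref{75}) characterising when the curvature growth condition holds along an ORNG, and Propositions \ref{p1}--\ref{p2}, which pass from the CGC to strong-curvature termination of all causal geodesics. Concretely, for $\alpha=(N-1)/(N-3)$ one wants to show (i) that this value is admissible in the sense of (\ref{52}) precisely when $N\in\{4,5\}$, (ii) that it then forces the CGC along every ORNG terminating at the central singularity, and (iii) that this makes the central naked singularity strong-curvature by Propositions \ref{p1} and \ref{p2}.

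\textbf{Admissibility.} First I would check that $\alpha=(N-1)/(N-3)$ belongs to the set appearing in (\ref{52}). Solving $(2k+N-1)/(N-1)=(N-1)/(N-3)$ for $k$ gives $k=(N-1)/(N-3)$, so $\alpha=(N-1)/(N-3)$ is admissible exactly when $(N-1)/(N-3)$ is a positive integer (the bound $k\leq (N-1)/(N-3)$ then holding automatically with equality), i.e. when $(N-3)\mid(N-1)$, equivalently $(N-3)\mid 2$. For $N>3$ this forces $N-3\in\{1,2\}$, that is $N\in\{4,5\}$, with $k=3$ for $N=4$ and $k=2$ for $N=5$. This is the arithmetic reason the statement is confined to $N\in\{4,5\}$: for $N\geq 6$ the value $(N-1)/(N-3)$ is not of the form demanded by (\ref{52}), so no central naked singularity is parametrised by this $\alpha$ (and, by (\ref{75}), every admissible $\alpha$ then violates the CGC along ORNGs, matching the $N\geq 6$ statement in the abstract).

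\textbf{The CGC holds.} Next, with $\alpha=(N-1)/(N-3)$ fixed, $\alpha$ is the left endpoint of the interval $[(N-1)/(N-3),\infty)$ and hence lies in it, so by the equivalence (\ref{75}) we get $\lim_{\lambda\to0}\lambda^2\,\mathbf{Ric}(\dot{\gamma},\dot{\gamma})>0$ along every ORNG $\gamma:[\lambda_0,0)\to M$ terminating at the central singularity. Unpacking this via (\ref{72})--(\ref{74}): at $\alpha=(N-1)/(N-3)$ the exponent $(3-N)\alpha+(N-1)$ in the second term of (\ref{74}) vanishes while $(3-N)\alpha+N$ in the first term equals $1$, so the first limit is $0$ and the second term survives as $\alpha(N-2)\tilde{\mathcal{M}}(0,0)/(2X_0^{N-3})$, which is strictly positive since $X_0>0$ is the simple positive root supplied by the SPRC and $\tilde{\mathcal{M}}(0,0)>0$; hence the sum in (\ref{74}) is strictly positive and the CGC holds along $\gamma$.

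\textbf{Conclusion.} It remains to invoke the spherically symmetric machinery: since $(M,\mathbf{g})$ satisfies the DEC and NEC and admits an ORNG along which the CGC holds, Proposition \ref{p2} gives that $(M,\mathbf{g})$ contains a strong curvature singularity; equivalently, by Proposition \ref{p1} together with Definitions \ref{d16} and \ref{d}, every causal geodesic of $M$ terminates strongly, and in particular the central naked singularity furnished by the SPRC is of strong-curvature type. The main obstacle is not any single computation but the admissibility step: one must check that the threshold value $\alpha=(N-1)/(N-3)$ is simultaneously compatible with the well-definedness constraint (\ref{52}) --- so that the naked singularity of Theorems \ref{t2}--\ref{t3} genuinely exists --- and with the strength constraint (\ref{75}), and the divisibility argument above shows that their intersection is nonempty exactly for $N\in\{4,5\}$.
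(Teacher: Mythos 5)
Your proposal is correct and follows essentially the same route as the paper: it combines the admissibility constraint (\ref{52}) with the CGC characterisation (\ref{75}) and then invokes Theorems \ref{t2}--\ref{t3} and Propositions \ref{p1}--\ref{p2}, exactly as the paper does. The only (harmless) difference is cosmetic: where the paper enumerates the admissible values of $\alpha$ separately for $N=4$ and $N=5$ (equations (\ref{77})--(\ref{78})) and intersects each list with $[\tfrac{N-1}{N-3},\infty)$, you solve for $k$ and use the divisibility condition $(N-3)\mid 2$, and you additionally spell out the finite positive limit in (\ref{74}) at the threshold value, which the paper leaves implicit in (\ref{75}).
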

\begin{proof}
This result follows directly from equations (\ref{52}) and (\ref{75}), by employing theorems \ref{t2} and \ref{t3}, and propositions \ref{p1} and \ref{p2}. For the singularity to be naked, equation (\ref{51}) must hold. For the singularity to be strong, equation (\ref{75}) must hold. We break down our proof into two cases:
 \begin{enumerate}[(i)]
     \item For naked singularities in $N=4$, it is necessary to have,
     \begin{equation}\label{77}
         \alpha \in \left\{ \frac{5}{3},\frac{7}{3},3 \right\},
     \end{equation}
     \noindent and for strong curvature singularities in $N=4$, it is sufficient to have $\alpha =3$. Hence for $\alpha = (N-1)/(N-3)=3$ we get strong curvature naked singularities in 4 dimensions.

     \item For naked singularities in $N=5$, it is necessary to have,
     \begin{equation}\label{78}
         \alpha \in \left\{\frac{3}{2},2\right\},
     \end{equation}
     \noindent and for strong curvature singularities in $N=5$, it is sufficient to have $\alpha = 2$. Hence for $\alpha = (N-1)/(N-3)=2$ we get strong curvature naked singularities in 5 dimensions.
 \end{enumerate}
  \noindent In any case, we find that equation (\ref{76}) must hold.\\
\end{proof}
    
\begin{proposition}\label{t5}
    Let $(M,\mathbf{g})$ be the future development arising from the $N\geq 6$ dimensional collapse of type-I matter fields subject to the WEC, DEC and Einstein equations. Further, let $\gamma: [\lambda_0,0) \to M$ be a past-incomplete geodesic that terminates at a central naked singularity in the limit $\lambda \to 0$. Then $\gamma$ does not satisfy the curvature growth condition.
\end{proposition}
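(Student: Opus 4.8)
The plan is to turn the statement into a short arithmetic incompatibility between the condition that makes the central singularity naked and the condition (\ref{75}) that would force a quadratic blow-up of the Ricci curvature. Since $\gamma$ terminates at a \emph{naked} central singularity, $(M,\mathbf g)$ contains at least one past-incomplete ORNG, so Theorem~\ref{t2} applies and $(M,\mathbf g)$ satisfies the PRC; by Remark~\ref{rem01} this forces the exponent $\alpha$ to lie in the set in (\ref{52}), i.e.\ $\alpha=\frac{2k+N-1}{N-1}$ for some integer $k$ with $1\le k\le\frac{N-1}{N-3}$. The first step is the elementary observation that $\frac{N-1}{N-3}=1+\frac{2}{N-3}$ lies in the interval $[1,2)$ precisely when $N\ge 6$ (it equals $2$ for $N=5$ and $3$ for $N=4$, which is exactly why those low dimensions behave differently). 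Hence for $N\ge 6$ the only admissible value is $k=1$, and therefore $\alpha=\frac{N+1}{N-1}$.

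The second step compares this value of $\alpha$ with the threshold $\frac{N-1}{N-3}$ appearing in (\ref{75}). Cross-multiplying, $(N+1)(N-3)=N^2-2N-3<N^2-2N+1=(N-1)^2$, so $\frac{N+1}{N-1}<\frac{N-1}{N-3}$ for every $N$; in particular $\alpha$ does not lie in $\bigl[\frac{N-1}{N-3},\infty\bigr)$, and (\ref{75}) gives at once that $\lim_{\lambda\to 0}\lambda^2\,\mathbf{Ric}(\dot\gamma,\dot\gamma)$ is not strictly positive. The same conclusion can be read off (\ref{74}) directly: substituting $\alpha=\frac{N+1}{N-1}$ turns the two powers of $r$ there into $r^{(N+3)/(N-1)}$ and $r^{4/(N-1)}$, both with strictly positive exponents, so each limit in (\ref{74}) is $0$ and its left-hand side vanishes. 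Since the NEC forces $\mathbf{Ric}(\dot\gamma,\dot\gamma)\ge 0$ along every causal curve via (\ref{8}), we in fact get $\lim_{\lambda\to 0}\lambda^2\,\mathbf{Ric}(\dot\gamma,\dot\gamma)=0$, so $\gamma$ fails the curvature growth condition (\ref{66}).

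If $\gamma$ is the radial null geodesic that \emph{identifies} the naked singularity (the situation of Theorem~\ref{t3} and of the abstract), the two steps above already finish the proof. For an arbitrary past-incomplete causal geodesic one would additionally invoke the decomposition (\ref{70}) from the proof of Proposition~\ref{p2}: by the NEC every term there is nonnegative, and the term governed by (\ref{71}) --- the one just shown to vanish for $\alpha=\frac{N+1}{N-1}$ --- is common to all causal geodesics, so the CGC could hold only through the remaining terms carrying the impact parameter $\ell$ or the normalisation $\mathcal B(\lambda)$. Estimating those contributions for $\alpha=\frac{N+1}{N-1}$, by feeding (\ref{73}) and the Einstein equations (\ref{10a})--(\ref{10b}) together with the near-singularity asymptotics $R\sim X_0\,r^{\alpha}$, $v\sim X_0\,r^{\alpha-1}$ into (\ref{70}) and tracking powers of $r$ case by case (radial versus non-radial, null versus timelike), is the step I expect to require the most care --- it is the only part that does not follow immediately from the interplay of (\ref{52}) and (\ref{75}).
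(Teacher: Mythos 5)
Your proposal is correct and takes essentially the same route as the paper: for $N\ge 6$ the set in (\ref{52}) admits only $k=1$, forcing $\alpha=\frac{N+1}{N-1}$, which is strictly below the threshold $\frac{N-1}{N-3}$ of (\ref{75}), so the limit in the CGC vanishes and the condition fails. The additional estimate you flag for non-radial or timelike geodesics is not carried out in the paper either --- its proof consists precisely of your first two steps and concludes directly from the arithmetic incompatibility $\frac{N+1}{N-1}<\frac{N-1}{N-3}$.
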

\begin{proof}
    It follows from equation (\ref{52}) that for naked singularities in $N\geq 6$, it is necessary to have,
     \begin{equation}\label{79}
         \alpha = \frac{N+1}{N-1},
     \end{equation}
     \noindent For strong curvature singularities in $N\geq 6$, it is sufficient to have $\alpha \geq (N-1)/(N-3)$ (using equation (\ref{75})). Combining these values of $\alpha$ with the value of $\alpha$ given by equation (\ref{79}), we find that in $N \geq 6$, we have strong curvature naked singularities provided,
     \begin{equation}\label{80}
         \frac{N+1}{N-1} \geq \frac{N-1}{N-3}. 
     \end{equation}
     \noindent This inequality leads to a contradiction and hence is not realized. Hence if there exist past-incomplete curves in the future development, then these curves do not satisfy the curvature growth condition of Clarke and Krolak.\\
\end{proof} 

    \begin{remark}\label{r0001}
        \noindent Note that equation (\ref{52}) does not apply in the case $N=3$; hence, in this case, we cannot proceed using the methodology of theorem \ref{t4}. \\
    \end{remark}
    
\noindent The cosmic censorship conjectures as discussed in section \ref{sec1} require that all genuine singularities (i.e., singularities through which the spacetime is inextendible) must be covered by a horizon. However, through our analysis here, we have shown that there exists a class of solutions to the Einstein field equations evolving from spherically symmetric regular initial data (in at least $N=4$ and $N=5$), that contain strong curvature naked singularities. \textcolor{black}{The spacetimes containing Tipler naked singularities will be $C^2$ inextendible in the sense that there does not exist any other spacetime with a $C^2$ metric, in which the original spacetime can be isometrically embedded. This happens due to the fact that the CGC implies a blow up of the Ricci curvature, and hence of the second order derivatives of the metric in the limit of approach to the central singularity.} In particular this means that there exist maximal \textit{non-globally hyperbolic} inextendible solutions to the Einstein equations in at least four and five dimensions. In other words, theorem \ref{t4} may be interpreted as a violation of some form of the cosmic censorship conjectures of general relativity, since these occur for a wide class of matter fields - namely type-I matter fields. On the other hand, we may interpret proposition \ref{t5} as being indicative of a \textit{possible} restoration of cosmic censorship in higher dimensions, in the sense that for the unhindered gravitational collapse of type-I matter fields in $N\geq 6$, if there exists a subclass of the initial data which evolves to a maximal development containing past incomplete ORNGs, then the naked singularity identified by these ORNGs could possibly be gravitationally weak. These results are remarkably similar to the ones obtained by Giambo and Quintavalle \cite{Giambo:2007ps}, where it is shown that ``phase transitions" between black holes and naked singularities that exist in $N=4$ and $N=5$, cease to exist beyond 5 dimensions.

\section{Concluding Remarks}\label{sec5}

\noindent Here we summarize the results obtained, and thereby provide some future directions and scope for further research.

\begin{enumerate}[(i)]
    \item In order to precisely formulate the notion of generic initial data for cosmic censorship, it is imperative to investigate the gravitational collapse of specific forms of matter that leads to the formation of a naked singularity. In this spirit, we constructed here the sufficient conditions for the existence of naked singularities as the end state of the spherically symmetric collapse of general type-I matter fields in an arbitrary and finite number of dimensions $N$. This analysis was previously conducted in \cite{PhysRevD.76.084026}, but with a fixed value of the parameter $\alpha = (N+1)/(N-1)$, or equivalently for the specific case $\chi_1 (v)\neq 0$. In the present paper, we drop this specialisation and later restrict the range of $\alpha$ by imposing the necessary and sufficient conditions for the formation of a naked singularity.

    \item The analysis of the strength of the central naked singularity forming in higher dimensions was not considered in \cite{PhysRevD.76.084026}. As such, the physical relevance of the singularities discussed therein is not accounted for. Here, we derived sufficient conditions for the naked singularity to be of the strong curvature type (Tipler strong \cite{Tipler:1977zza}) in terms of the parameter $\alpha$. This leads to a class of inextendible spacetimes that admit a strong curvature naked singularity. This investigation shows that the naked singularities discussed in \cite{PhysRevD.76.084026} need not necessarily be strong curvature singularities.

    \item An important point to consider here is the stability and genericity of the naked singularities in question. One such notion of stability is dynamical stability. To be considered significant, these naked singularities must not be removable by a perturbation of the initial data, and they should form for a sufficiently dense class of initial data.  We have not conducted any such analysis here. Following the general theme of dynamical relativity, one must again emphasize that it is the initial data that must be judged and not the final solution. Strictly speaking, one must check whether a generic set of choices for the initial data leads to a central naked singularity or not; only then can one speak about the genericity of the naked singularities. In other words, it is crucial to formalize the notion of genericity of the naked singularity, in order to make judgements about the cosmic censorship conjectures. Hence, a possible future scope would be to construct an appropriate notion of genericity for spherically symmetric gravitational collapse. These definitions may be motivated, for instance, from the notion of generic properties of vector fields in the study of dynamical systems \cite{abraham2008foundations}, or from topologically induced notion of genericity for some suitable function space of initial data \cite{joshi2007gravitational}.

    \item The naked singularities discussed herein may be either locally or globally naked, i.e., we have not investigated whether ORNGs escape the collapsing cloud before or after the apparent horizon does. Although the general conditions under which a singularity will be globally naked are available in the existing literature \cite{PhysRevD.47.5357}, one needs to further investigate under which conditions do these general conditions hold. 

    \item Although we did not discuss this explicitly for the general class of spacetimes considered here, it is known that for a few specific subclasses (like tangential pressure models and dust models), all the singularities for $N\geq 6$ are black hole singularities \cite{PhysRevD.69.104002,PhysRevD.72.024006,Banerjee:2002sy}. This is due to the fact that the Taylor expansion of the apparent horizon curve contains a term that becomes significant beyond a critical dimension, after which the apparent horizon curve has a decreasing behaviour. We find a somewhat similar result here, where for $N\geq 6$, a sufficient criterion for checking the curvature strength of the singularity fails to hold. Incidentally, Giambo and Quintavalle \cite{Giambo:2007ps} have obtained similar results regarding ``phase transitions" between naked singularities and black holes. In particular, they have shown that while such transitions are possible in $N=4$ and $N=5$ dimensions, they cease to exist for $N\geq 6$. This indicates that the existence of phase transitions between naked singualrities and black holes may be related to the rate of growth of curvature in the limit of approach to these singularities. The formation of strong curvature naked singularities could be physically better understood in the context of quantum gravitationally motivated regimes \cite{Chakraborty:2017uku,Alday:2019qrf,Lu:2008jk,Frassino:2022zaz,Pourhassan:2017kmm,Gomez-Fayren:2023wxk,Hendi:2021yii,Deo:2023vvb,PhysRevLett.96.031302,Emparan:2008eg,Maldacena:1997re,Gubser:2000nd}, for instance by analyzing the physical effects of extra spatial dimensions in such theories.
\end{enumerate}

\backmatter 

\bmhead*{Acknowledgments} KNS would like to thank Oem Trivedi for discussions regarding the interpretation of the results obtained in terms of quantum gravitationally motivated concepts. We would like to thank Roberto Giambo for valuable clarifications and discussions regarding similarity in dimensional dependence of CGC and phase transition properties of singularities. Additionally we also thank Jun-Qi Guo for bringing into notice the papers \cite{Dwivedi:1994qs}, \cite{PhysRevLett.60.241}, and \cite{PhysRevLett.59.2137}. 

\bmhead*{Data availability} Data sharing is not applicable to this article as no datasets were generated or analysed during the current study.

\bmhead*{Declarations}

\bmhead*{Conflict of interest} The authors have no competing interests to declare that are relevant to the content of this article.
\bibliography{main}

\end{document}